\documentclass[a4paper,UKenglish]{lipics-v2018}
%This is a template for producing LIPIcs articles. 
%See lipics-manual.pdf for further information.
%for A4 paper format use option "a4paper", for US-letter use option "letterpaper"
%for british hyphenation rules use option "UKenglish", for american hyphenation rules use option "USenglish"
% for section-numbered lemmas etc., use "numberwithinsect"

\usepackage{microtype}%if unwanted, comment out or use option "draft"

\newtheorem{lem}{Lemma}
\newtheorem{theo}{Theorem}
\newtheorem{coro}{Corollary}

\theoremstyle{definition}
\newtheorem{defi}{Definition}

%\graphicspath{{./graphics/}}%helpful if your graphic files are in another directory

\bibliographystyle{plainurl}% the recommnded bibstyle

\title{An EPTAS for machine scheduling with bag-constraints.}

\titlerunning{An EPTAS for machine scheduling with bag-constraints}%optional, please use if title is longer than one line

\author{Kilian Grage}{Department of computer science, Kiel University, 24118 Kiel, Germany}{}{}{}%mandatory, please use full name; only 1 author per \author macro; first two parameters are mandatory, other parameters can be empty.

\author{Klaus Jansen}{Department of computer science, Kiel University, 24118 Kiel, Germany}{}{}{}

\author{Kim Manuel Klein}{EPFL, 1015 Lausanne, Switzerland}{}{}{}

\authorrunning{K. Grage et al.}%mandatory. First: Use abbreviated first/middle names. Second (only in severe cases): Use first author plus 'et al.'

\Copyright{Kilian Grage, Klaus Jansen, Kim Manuel Klein}%mandatory, please use full first names. LIPIcs license is "CC-BY";  http://creativecommons.org/licenses/by/3.0/

\subjclass{\ccsdesc[500]{Theory of computation~Scheduling algorithms}}% mandatory: Please choose ACM 2012 classifications from https://www.acm.org/publications/class-2012 or https://dl.acm.org/ccs/ccs_flat.cfm . E.g., cite as "General and reference $\rightarrow$ General literature" or \ccsdesc[100]{General and reference~General literature}. 

\keywords{approximation algorithms, scheduling, makespan minimization, bag-constraints, identical machines}%mandatory

\category{}%optional, e.g. invited paper

%\relatedversion{}%optional, e.g. full version hosted on arXiv, HAL, or other respository/website

%\supplement{}%optional, e.g. related research data, source code, ... hosted on a repository like zenodo, figshare, GitHub, ...

\funding{Research supported by German Research Foundation (DFG) JA 612 /20-1,}%optional, to capture a funding statement, which applies to all authors. Please enter author specific funding statements as fifth argument of the \author macro.

%\acknowledgements{I want to thank \dots}%optional

%Editor-only macros:: begin (do not touch as author)%%%%%%%%%%%%%%%%%%%%%%%%%%%%%%%%%%
\EventEditors{John Q. Open and Joan R. Access}
\EventNoEds{2}
\EventLongTitle{36th International Symposium on Theoretical Aspects of Computer Science (STACS'19)}
\EventShortTitle{STACS 2019}
\EventAcronym{STACS}
\EventYear{2019}
\EventDate{March 13-16, 2019}
\EventLocation{Berlin, Germany}
\EventLogo{}
\SeriesVolume{42}
\ArticleNo{XX}
\nolinenumbers %uncomment to disable line numbering
\hideLIPIcs  %uncomment to remove references to LIPIcs series (logo, DOI, ...), e.g. when preparing a pre-final version to be uploaded to arXiv or another public repository
%%%%%%%%%%%%%%%%%%%%%%%%%%%%%%%%%%%%%%%%%%%%%%%%%%%%%%

\begin{document}

\maketitle

\begin{abstract}
Machine scheduling is a fundamental optimization problem in computer science. The task of scheduling a set of jobs on a given number of machines and minimizing the makespan is well studied and among other results, we know that EPTAS's for machine scheduling on identical machines exist. Das and Wiese initiated the research on a generalization of makespan minimization, that includes so called bag-constraints. In this variation of machine scheduling the given set of jobs is partitioned into subsets, so called bags. Given this partition a schedule is only considered feasible when on any machine there is at most one job from each bag. 

Das and Wiese showed that this variant of machine scheduling admits a PTAS. We will improve on this result by giving the first EPTAS for the machine scheduling problem with bag-constraints. We achieve this result by using new insights on this problem and restrictions given by the bag-constraints. We show that, to gain an approximate solution, we can relax the bag-constraints and ignore some of the restrictions. Our EPTAS uses a new instance transformation that will allow us to schedule large and small jobs independently of each other for a majority of bags. We also show that it is sufficient to respect the bag-constraint only among a constant number of bags, when scheduling large jobs. With these observations our algorithm will allow for some conflicts when computing a schedule and we show how to repair the schedule in polynomial-time by swapping certain jobs around. 
\end{abstract}

\newpage
% Introduction 
\section{Introduction}\label{S:Introduction}

The machine scheduling problem is a classical optimization problem known in computer science. It stems from the simple idea of having multiple jobs that need to be scheduled on a set of machines. Formally we define this problem, also known as \textit{makespan minimization}, as follows: Given a set of jobs $J$ such that each job $j \in J$ has a height or processing time denoted with $p_j$ and $m$ machines. The task is to find a schedule that assigns all jobs to machines and minimizes the makespan, which is the biggest load on any machine. This problem is known to be strongly NP-hard and therefore approximation algorithms are being studied for makespan minimization and different variations of this problem.

Especially polynomial-time approximation schemes (in short PTAS) have been studied for makespan minimization. A PTAS for a minimization problem is a family of algorithms $(A_\epsilon)_{\epsilon>0}$ such that for every fixed $\epsilon >0$ and for every instance $I$ with optimal value $OPT_I$ the algorithm $A_\epsilon$ yields a solution of at most value $(1+\epsilon) OPT_I$ in a running time polynomial in the size of the input $|I|$. We call $(A_\epsilon)_{\epsilon>0}$ efficient PTAS (EPTAS) if it has a running time of the form $f(\frac{1}{\epsilon})* |I|^c$ for some $c\in O(1)$ and a not necessarily polynomial function $f$. Finally if the running time is fully polynomial in $|I|$ and $\frac{1}{\epsilon}$ we call $(A_\epsilon)_\epsilon$ fully PTAS (FPTAS).

For the classical makespan minimization problem it is known that the problem is strongly NP-hard and there are known PTAS's \cite{schedptas,schedptasimpro} and furthermore EPTAS's \cite{jeptas1,jeptas2}. One known generalization of the machine scheduling problem is to change the model of machines. In the unrelated machines model, the height or processing time of jobs also depends on the machine it is running on. For this variant there are known $2$-approximations by Lenstra, Shmoys and Tardos \cite{unrelated1} and an improvement with a $2-\frac{1}{m}$- approximation by Shchepin and Vakhania \cite{unrelated2}. Also a lower bound of $\frac{3}{2}$ is known for unrelated machines \cite{unrelated1}.

We will only look at identical machines in this paper and we want to consider a variant of machine scheduling that involves conflicts. In this variation some jobs are in conflict with each other and therefore not allowed to be executed on the same machine. An easy and intuitive way to model conflicts is to use a graph, where each job is represented as a node and an edge between nodes indicates a conflict. The problem of minimizing the makespan under these restrictions given an arbitrary conflict-graph is NP-hard. A tight $2$-approximation is known for the case where the conflict graph is polynomial-time colorable \cite{conflict}.

A special case of this problem is given when the conflict graph is a cluster graph, that consists of multiple components such that each component is a clique. In this case we can easily model the given conflicts as sets of jobs and each set contains all nodes of one clique. The problem of scheduling jobs under these type of conflicts is also known as machine scheduling with bag-constraints \cite{bagptas}, whereas the mentioned sets are called bags.

\subsection{Machine scheduling with bag-constraints}

In software engineering, especially for parallel and distributed systems, it is not unusual that one wants to enforce different tasks to run on different machines/processors. This can be done to schedule tasks more efficiently on parallel machines but also for system stability and security purposes. To prevent failure and crashes, jobs need to be run separately, so in case that one machine fails, other machines can still continue working. 

To model this, one can extend the definition of machine scheduling by partitioning the set of jobs in subsets $B_1,B_2,...,B_b$. We expect that each job $j\in J$ is contained in exactly one of these subsets, that we call \textit{bag}. For the solution we expect again a schedule with minimum makespan, but additionally we only allow that at most one job from each bag is scheduled on a single machine. Further we will call a violation of the bag-constraints, given by two jobs of the same bag on one machine, a \textit{conflict} and say the respective jobs are conflicting each other. In this paper we will consider only identical machines and our goal will be to minimize the given makespan. The resulting problem is known as \textit{machine scheduling with bag-constraints}.

This problem was recently studied by Das and Wiese and they developed a PTAS for machine scheduling with bag-constraints on identical machines. They also considered the case of unrelated machines and showed a lower bound for the approximation ratio and giving an 8-approximation for the case where all jobs of each bag can go on the same machines \cite{bagptas}. The known PTAS uses a dynamic program to schedule large jobs like in an optimal solution. Given this initial distribution they use different techniques, like flow-networks and greedy algorithms to build a schedule of bounded height. Since the problem is strongly NP-hard we cannot expect to find an FPTAS unless $P=NP$. In this paper we will close this gap between known PTAS and non-existing FPTAS and solve one of the main open problems of Das and Wiese by giving the first EPTAS for machine scheduling with bag-constraints.

The main difficulty for giving an EPTAS for this problem is, that the known strategies for EPTAS from other scheduling problems cannot be applied for this problem without further modifications. The common strategy for EPTAS' for makespan minimization is to find an efficient placement of large jobs, usually with a (mixed) integer linear program (in short (M)ILP) and place small jobs with greedy algorithms \cite{jeptas1,jeptas2}. In the presence of bag-constraints the problem arises that not every efficient placement of large jobs allows for a schedule of bounded height, as seen in figure \ref{abb:diffic}. The first idea that comes into mind is to incorporate the bag-constraints into the MILP. To do this however we would need a number of integer variables depending on the number of bags. As we can have as many bags as jobs we would need a running time exponential in the size of the instance $|I|$ to solve such an MILP.

\begin{figure}
 \centering
  \includegraphics[scale = 0.35]{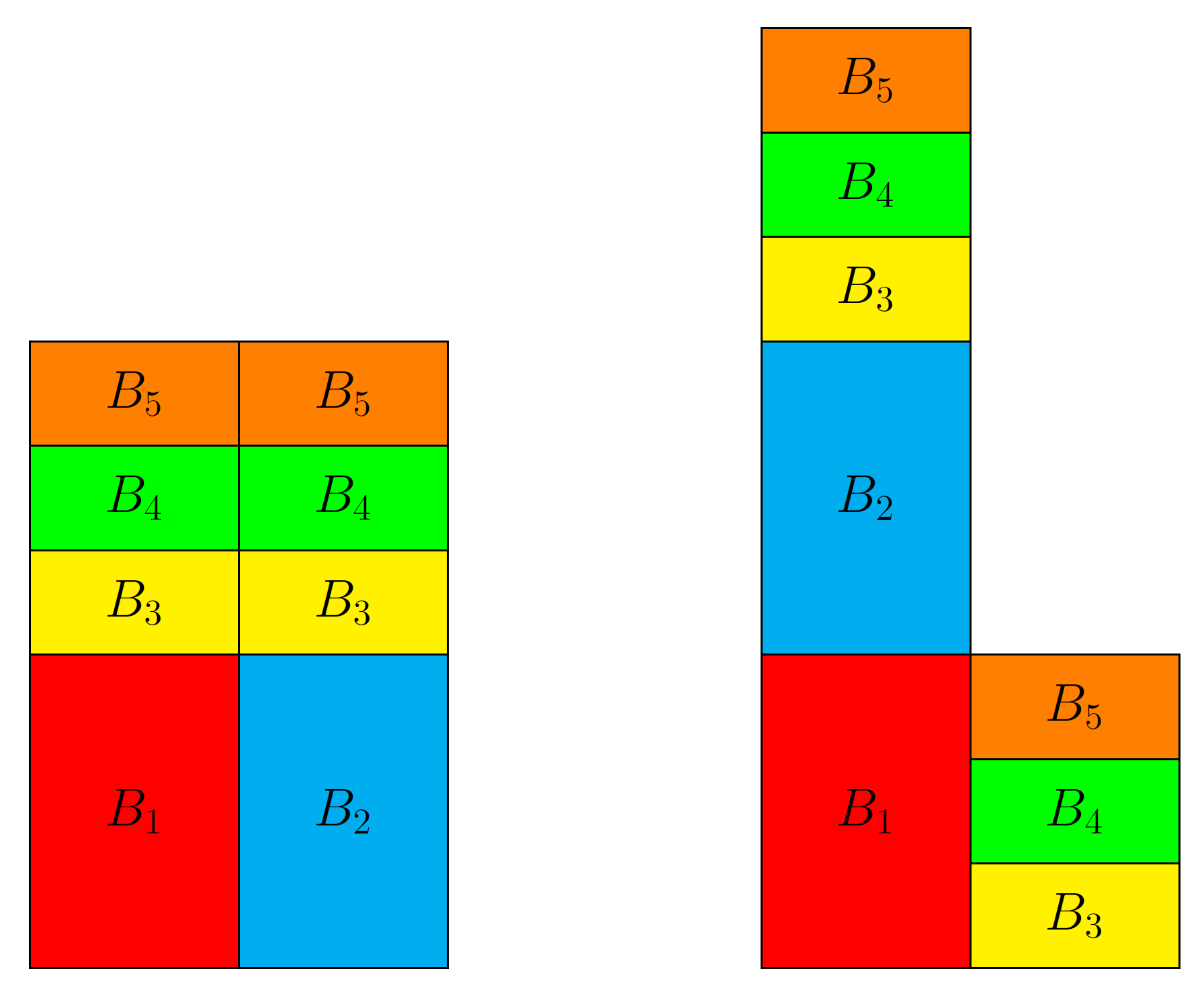}
	 \caption{}
	{\footnotesize Comparing possible schedules depending on large job placement. In the left schedule the two large jobs were scheduled on different machines, allowing for a schedule of height OPT. Considering the placement of large jobs in the right schedule, we see even though large jobs are packed with height OPT, we are forced to place small jobs like depicted and increase the overall makespan.}	
  \label{abb:diffic}
\end{figure}

We overcome these problems by relaxing the bag-constraints to some extent. While computing a schedule we will temporarily allow for some conflicts of jobs from the same bag, but in a controlled way, such that we can repair our schedule in polynomial-time and with a small increase of the overall makespan. We do this by splitting bags into two groups $G_1$ and $G_2$ with $G_1$ only holding a constant number of bags. We then show that to find a feasible distribution of all large jobs, that it is sufficient to schedule all large jobs from $G_1$ without conflicts. Even when multiple jobs of the same bag in $G_2$ are on one machine, we can repair the schedule without increasing the makespan. We use this insight to distribute large jobs to machines with an MILP using a constant number of integral variables.

For the second group $G_2$ especially the placement of the small jobs can cause problems, since  we can have many bags with a large number of small jobs. Additionally these jobs might conflict with large jobs already in place. To solve this problem we introduce a new instance transformation that we apply before constructing and solving our MILP. This transformation splits large and small jobs of bags in $G_2$ in separate bags, while only increasing the height of the schedule marginally. To schedule bags only containing small jobs we can apply some easy heuristics such as largest processing time first (LPT) \cite{graham} with some extensions to respect bag-constraints. Given a solution of the transformed instance we can easily construct a solution of the original one in polynomial-time, yielding us an EPTAS.
 
\begin{theo}
There exists an EPTAS for the machine scheduling with bag-constraints on identical machines.
\end{theo}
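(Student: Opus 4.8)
The plan is to build on the standard EPTAS scaffolding for makespan minimization on identical machines and to graft onto it the structural observations sketched above, so that the bag-constraints cost only a constant number of integer variables. By the usual dual-approximation argument it suffices, for a guessed target value $T$, either to produce a schedule of makespan $(1+O(\epsilon))T$ or to certify that no schedule of makespan $T$ exists; scale so that $T=1$. Call a job $j$ \emph{large} if $p_j>\epsilon$ and \emph{small} otherwise, and round every large processing time up to the nearest power of $(1+\epsilon)$ in $[\epsilon,1]$, which leaves $O(\epsilon^{-1}\log\epsilon^{-1})$ distinct large sizes and loses only a $(1+\epsilon)$ factor. A machine of load at most $1$ carries at most $1/\epsilon$ large jobs, so its multiset of large sizes --- its \emph{profile} --- ranges over a set $\mathcal P$ whose size is constant for fixed $\epsilon$.

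The heart of the argument is the bag bookkeeping. Sort the bags by the number of large jobs they contain, let $G_1$ be the $K=K(\epsilon)$ bags with the most large jobs ($K$ a sufficiently large constant, polynomial in $1/\epsilon$), and let $G_2$ be the rest. I would prove a structural lemma stating that there is a schedule of makespan $(1+O(\epsilon))$ in which the large jobs of $G_1$ sit without conflict while the large jobs of $G_2$ need only obey the profile constraints: any assignment of large jobs that respects the profiles and places the $G_1$ large jobs conflict-free can be turned into a fully conflict-free large-job placement by exchanging large jobs of equal rounded size between machines, because every bag in $G_2$ has so few large jobs --- all the ``crowded'' bags having been put into $G_1$ --- that for every conflicting pair there is a machine of matching profile holding no job of that bag. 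This is exactly what lets the per-bag integer variables for $G_2$ be dropped.

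Next I would apply the instance transformation: each bag $B\in G_2$ is split into its large part (the large jobs of $B$, kept as one bag) and its small part (the small jobs of $B$, made into a fresh bag), so that afterwards every bag is a $G_1$ bag, a large-only $G_2$ bag, or a small-only bag. I would show that an optimal schedule of the original instance induces a schedule of the transformed instance of makespan at most $(1+O(\epsilon))$ --- splitting only deletes constraints --- and, crucially, that the transformation is reversible: a schedule of the transformed instance is converted back to a feasible schedule of the original one by moving an $O(\epsilon)$ amount of small load per machine so that the reunited large and small parts of each $G_2$ bag no longer collide. Then I set up a mixed integer linear program whose integer variables count, for each profile in $\mathcal P$ and each relevant large configuration coming from $G_1$ and from the constantly many classes of large-only $G_2$ bags, how many machines are opened with that profile --- a constant number of integer variables --- while fractional variables assign the small jobs and small-only bags to machines in the configuration-LP style of Lenstra--Shmoys--Tardos, the clique constraint of a small-only bag being enforced approximately since at most $O(1/\epsilon)$ of its jobs touch any single machine, and any residual small load being dispatched by an LPT-type greedy rule \cite{graham}. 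By Lenstra's (or Kannan's) algorithm this MILP is solvable in time $f(1/\epsilon)\cdot|I|^{O(1)}$.

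It remains to round and repair. The fractional small-job assignment is rounded by the standard argument, so that each machine receives at most one extra small job per fractional bag, adding $O(\epsilon)$ to its load; the large-job conflicts within $G_2$ are removed by the swap of the second paragraph without raising the makespan; and the transformation is undone at $O(\epsilon)$ cost as above. Collecting the $O(\epsilon)$ terms and rescaling $\epsilon$ yields a schedule of makespan $(1+\epsilon)OPT$, and every step runs in time $f(1/\epsilon)\cdot|I|^{O(1)}$, giving the EPTAS. The step I expect to be the main obstacle is proving the structural lemma --- that $K$ can be kept constant and that the swap-repair never increases the makespan --- together with the claim that the large/small splitting is reversible at only $O(\epsilon)$ cost; the MILP formulation, its solution, and the LP-rounding of the small jobs are essentially routine adaptations of known scheduling techniques.
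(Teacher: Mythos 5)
Your outline follows the paper's architecture quite faithfully — priority/non-priority bag split, swap-repair for conflicts among the residual bags, an instance transformation that separates the large and small parts of residual bags, a configuration MILP with constantly many integer variables, and an LPT-style distribution of small jobs — so this is the right skeleton. However, two of the technical choices you leave as "the main obstacle" are exactly where your version, as stated, would break, and the paper's definitions differ from yours in ways that are load-bearing.

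First, you define $G_1$ as the $K$ bags with the most large jobs \emph{in total}. The paper instead sorts \emph{size-restricted} bags: for each large size $s$ it orders bags by $|B_l^s|$ and declares the top $b'$ for \emph{each} $s$ priority. This is essential for the swap-repair. If a conflicting job of rounded size $s$ belongs to a residual bag $B_r$, the repair needs to find a machine carrying another size-$s$ job but nothing from $B_r$, and the counting that guarantees its existence compares $|B_r^s|$ against $|B_{o_s(1)}^s|,\dots,|B_{o_s(b')}^s|$. Under your total-count ordering, a $G_2$ bag could still dominate every $G_1$ bag in jobs of some particular size $s$, and then the pigeonhole count over size-$s$ slots fails. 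Second, your transformation splits each $G_2$ bag into a large-only bag and a small-only bag and asserts that "splitting only deletes constraints," so reversibility costs $O(\epsilon)$. The paper additionally inserts a small \emph{filler-job} into the small-only bag for every large/medium job removed, and it is these placeholders that make the reverse step work: when a reunited small job and large job collide, you swap the small job with the filler-job that shadows the colliding large job, at zero makespan cost. Without filler-jobs there is no designated free slot and the reverse direction is not established. You also drop the medium tier and the Das--Wiese parameter $k$ (which bounds the total medium area by $\epsilon^2 m$ and creates a multiplicative gap between large and small sizes); the paper relies on that gap both to bound the number of items in a pattern and to push medium jobs of residual bags onto machines via a flow-network argument, a step your sketch has no counterpart for. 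None of these is unfixable, but as written they are gaps rather than routine details.
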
 

In the following we will first introduce some preprocessing steps, classification of bag groups and the aforementioned instance modification as well as its correctness (section 2). Knowing that we can work with a modified instance of easier structure we then formalize the MILP and show how we can repair a given MILP solution in case the solution does not allow for a feasible distribution of jobs (section 3). We finalize our presentation of the EPTAS by showing how small jobs can be scheduled such that the height of the schedule does not increase arbitrarily in the end (section 5). Due to space limitations we moved proofs into the appendix.

%Preliminarys

\section{Preliminaries}\label{S:Prel}

In the following we will assume that we are given an instance $I$ of the machine scheduling with bag-constraints problem with a set of jobs $J$ that is separated in a partition of bags $B_1 ,..., B_b$ and a number of machines $m$. Furthermore an approximation ratio is given by $\epsilon$ and we will assume without loss of generality that $\frac{1}{\epsilon}$ is integral. For this arbitrary instance we compute a solution that has a makespan of at most $(1 + O(\epsilon)) OPT_I$. It is sufficient to find an $(1+O(\epsilon)) OPT_I$ schedule as we can replace our input $\epsilon$ with $\epsilon' = \frac{1}{c} \epsilon$ for $c \in O(1), c \neq 0$ to gain a $(1+\epsilon) OPT_I$ schedule. In the following we will also write $log$ for $log_2$ unless another base is specified.

\subsection{Classification of jobs and bags}

Before we start classifying jobs and bags, we will apply some standard scaling and rounding techniques to reduce the number of item sizes. With a binary search frame work we may assume that we know the height of an optimal makespan $OPT$ and by scaling we may assume that $OPT = 1$. Further we will round up all job lengths to the next power of $1+ \epsilon$. With this we may assume that for any $j\in J$ we have that $p_j =(1+\epsilon)^k$ for some $k \in \mathbb{N}$. Note that this rounding increases our optimum to $1+\epsilon$ \cite{bagptas}. 

We later want to use a result from Das and Wiese to schedule medium size jobs. For that reason we will use the same classification of jobs that is based on the following lemma. As Das and Wiese also gave a very short and nice proof by contradiction, we will omit the proof in this paper and refer to \cite{bagptas}.

\begin{lem}\label{lem:medk}{}
We can compute $k \in \mathbb{N}_{\le \frac{1}{\epsilon^2}}$ such that
$\sum\limits_{j\in J : p_j \in [\epsilon^{k+1},\epsilon^{k})}{p_j} \le \epsilon^2 *m$.
\end{lem}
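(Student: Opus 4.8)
The plan is to use an averaging/pigeonhole argument over the $\frac{1}{\epsilon^2}$ candidate values of $k$. Consider the intervals $I_k = [\epsilon^{k+1}, \epsilon^k)$ for $k = 1, 2, \ldots, \frac{1}{\epsilon^2}$. These intervals are pairwise disjoint, so the sum $\sum_{k=1}^{1/\epsilon^2} \sum_{j \in J : p_j \in I_k} p_j$ counts each job at most once, and hence is bounded above by the total processing time $\sum_{j \in J} p_j$. Since we have scaled so that $OPT = 1$, the total load is at most $m$ (every machine has load at most $1$ in the optimal solution, and the total load cannot exceed $m \cdot OPT = m$). Therefore $\sum_{k=1}^{1/\epsilon^2} \left( \sum_{j \in J : p_j \in I_k} p_j \right) \le m$.

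Now I average: there are $\frac{1}{\epsilon^2}$ values of $k$ in the range, and the sum of the corresponding quantities is at most $m$, so by pigeonhole at least one index $k \in \{1, \ldots, \frac{1}{\epsilon^2}\}$ must satisfy $\sum_{j \in J : p_j \in I_k} p_j \le \frac{m}{1/\epsilon^2} = \epsilon^2 m$. This is exactly the claimed bound. To make the statement algorithmic, one simply computes all $\frac{1}{\epsilon^2}$ sums explicitly in polynomial time (each is a sum over a subset of $J$) and picks any index achieving the bound; at least one exists by the argument above.

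The main subtlety — really the only thing to be careful about — is the bookkeeping that the intervals $I_k$ are genuinely disjoint (which they are, since $\epsilon^{k+1} \le p_j < \epsilon^k$ pins down $k$ uniquely as $\lfloor \log_{1/\epsilon}(1/p_j) \rfloor$ roughly) and that the total load is at most $m$ under the normalization $OPT = 1$. Note also that after the rounding step every $p_j$ is a power of $1+\epsilon$ rather than of $\epsilon$, but this is irrelevant: the argument only uses that job sizes are positive reals and the intervals partition $(0,1]$ appropriately, so each job falls into at most one $I_k$. Since Das and Wiese give a short proof by contradiction, I would present essentially this averaging version and otherwise defer to \cite{bagptas}.
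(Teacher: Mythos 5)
Your proof is correct, and it is in substance the same pigeonhole argument that the paper defers to Das and Wiese for: the intervals $[\epsilon^{k+1},\epsilon^k)$ for $k=1,\dots,\frac{1}{\epsilon^2}$ are disjoint, the total processing time is at most $m\cdot OPT=m$ after normalization, so some interval carries mass at most $\epsilon^2 m$, and the right $k$ can be found by direct enumeration. The only cosmetic difference is that you phrase it as a direct averaging argument, whereas the cited proof runs the same count as a proof by contradiction (if every interval carried more than $\epsilon^2 m$, summing over the $\frac{1}{\epsilon^2}$ disjoint intervals would exceed $m$); these are the same argument stated in contrapositive form. Your remark that the $(1+\epsilon)$-rounding of job sizes is irrelevant here is also correct, since the argument only needs disjointness of the intervals and the bound $\sum_j p_j\le m$.
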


For the rest of this paper we will assume that $k$ is set to be the parameter of this lemma. We will further classify jobs just like in \cite{bagptas} with this $k$ as follows: We call a job $j$ \textit{large} if $p_j \ge \epsilon^k$, \textit{medium} if  $\epsilon^k > p_j \ge \epsilon^{k+1} $ and \textit{small} if $p_j < \epsilon^{k+1}$. Finally we also want to respect bags that have a large amount of medium and large jobs. We call a bag $B_l$ \textit{large} when $B_l$ holds at least $\epsilon *m$ jobs that are medium or large. If a bag is not large we call it \textit{small}. We can also note that the amount of large bags is bounded by $O(\frac{1}{\epsilon^{k+2}})$ \cite{bagptas}.

For our EPTAS we will need another type of bag-classification. To be able to schedule large and medium jobs efficiently we need to differentiate between bags that we are going to prioritize and bags that are less important. The property we want to enforce for such an important bag is that they hold a large amount of jobs with a certain large size. Therefore we introduce a notation for so called size-restricted bags. Further we define a functions to represent a sorted list of size-restricted bags of one size.

\begin{defi}
Let $B_l$ be any bag and $s$ any item size. With $B_l^s:= \{j\in B_l | p_j = s\}$ we denote the set of all jobs in $B_l$ with size $s$ and call $B_l^s$ \textit{size-restricted} bag with size $s$. Also we define for each large item size $s$ a function $o_s:\mathbb{N}_{\le b} \Rightarrow \mathbb{N}_{\le b}$ to be a bijective index function such that for every $l< b$ we have that $|B^s_{o_s(l)}| \ge  |B^s_{o_s(l+1)}|$.
\end{defi}

Given a size-restricted bag $B_l^s$ we may also refer to $B_l$ as the respective full bag. In order to find a feasible distribution of all large jobs we need to ensure that a constant number $b' \le b$ bags of each large item size is packed with no violations of the bag-constraints. Even if the rest of the bags are placed such that a machine holds multiple jobs of one bag, we can repair the schedule to gain an overall feasible schedule. The constant $b'$ depends on the number of medium jobs any machine can hold in an optimal solution. We therefore set $q := \frac{1+2*\epsilon+\epsilon^2}{\epsilon^{k+1}}$ to be this number of jobs. Note that this $q$ also results from a modification that we have yet to introduce, which will increase the height of an optimal solution up to $T:= 1+2\epsilon +\epsilon^2$. We formally define this set of bags as follows:

\begin{defi}\label{def:prio}
  Set $b':= (d*q+1)*q$ with $q$ being the number of medium jobs any machine can hold in an optimal schedule as introduced before the lemma and $d\in O(log_{1+\epsilon}{\frac{1}{\epsilon^{k}}})$ being the number of item sizes of large jobs. We call a bag $B_l$ \textit{priority bag} if and only if there exists a large item size $s$ and a position $i\le b'$ such that $o_s(i) = l$.  Further we define that every large bag is a priority bag as well. If a bag $B_l$ by this definition is not a priority bag, we call $B_l$ \textit{non-priority bag}.

\end{defi}

For the rest of the paper we set $b'$ just like in this definition. Intuitively we look at all size restricted bags (for large item sizes) and take the first $b'$ bags in the sorted lists respectively and call these and their respective full bags priority bags.
Note that we can add large bags to the set of priority bags as their number is also bounded by $O(\frac{1}{\epsilon^{k+2}})$. This gives us the advantage that all non-priority bags are small bags. We will further modify small bags to ensure we can place small jobs and large jobs from non-priority bags independent from each other.

\subsection{Instance Transformation} 

The first step of our algorithm is to make further modifications on the instance. In order to make placing non-priority bags easier, we will split large and small jobs in separate bags. We further add small jobs for every medium and large job, so that after finding a solution for this modified instance, we also can revert the modifications to get a solution for the instance with the original bags.

 Consider therefore the following process: Let $B_l$ be any non-priority bag and $p_{max}$ the height of the highest job that is still only a small job in $B_l$ (if $B_l$ should hold no small jobs, then we do not modify $B_l$ and continue with the next bag). Open up a new bag $B'_l$ that contains all large Jobs of $B_l$. And for all large and medium jobs $j \in B_l$ replace $j$ in $B_l$ with $\overline j$ such that $p_{\overline j} = p_{max}$. We will call these additionally added small jobs \textit{filler-jobs}.  Alter every non-priority bag in this way and we will end up with a modified instance that we denote with $I'$.
 
 Intuitively we shortened all medium and large jobs to small jobs in non-priority bags, while saving copies of the large jobs in separate bags, as depicted in figure \ref{abb:modtra}. Note that we removed medium jobs from non-priority bags. We will show how to use a result of Das and Wiese \cite{bagptas} to add these back to our instance. Before doing so we will first show that by applying this modification we overall lose only an $\epsilon$ factor in the objective.
 
 \begin{figure}
  \centering
	\noindent
  \includegraphics[scale = 0.4]{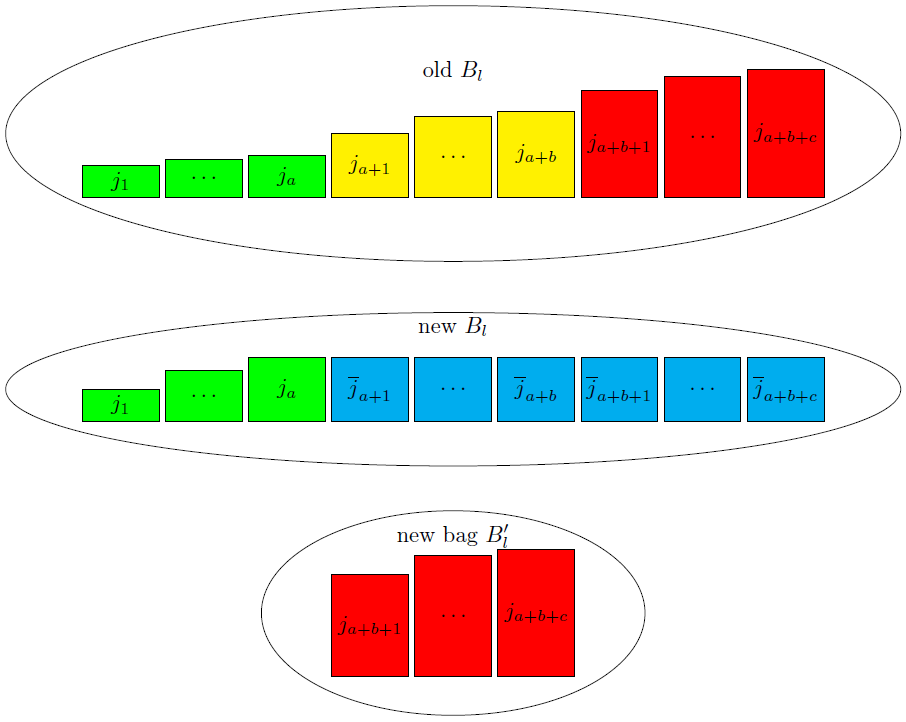}
	 \caption{}
	{\footnotesize Example of a transformation of non-priority bag $B_l$. Jobs are coloured, large jobs are red, medium jobs yellow and small jobs green. Additionally filler-jobs are coloured blue to differ from normal small jobs. As seen in the picture each bag will be separated in two bags. One containing all large jobs and another one containing only small jobs. }	
  \label{abb:modtra}
\end{figure}

\begin{lem}\label{lem:mod1}
Assume we are given an instances for machine-scheduling with bag-constraints $I$ and let $I'$ be the modified instance of $I$. If there is a solution for $I$ with makespan $C$ then there is also a solution for $I'$ with makespan $(1+\epsilon)*C$.
\end{lem}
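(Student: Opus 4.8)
The plan is to take a schedule $\sigma$ for $I$ of makespan $C$ and turn it into a schedule $\sigma'$ for $I'$ by placing every newly created job onto the machine of the original job from which it descends. Concretely, I would leave every priority bag and every small job of every non-priority bag exactly where $\sigma$ puts them, and leave every non-priority bag that is not modified (i.e.\ has no small job) untouched as well. For a modified non-priority bag $B_l$: whenever $\sigma$ schedules a large job $j\in B_l$ on a machine $M$, keep $j$ on $M$ (in $\sigma'$ it now belongs to the fresh bag $B'_l$) and additionally place the filler job $\overline{j}$ of size $p_{max}$ on $M$; whenever $\sigma$ schedules a medium job $j\in B_l$ on a machine $M'$, place only the filler $\overline{j}$ on $M'$, since $j$ itself no longer exists in $I'$.

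First I would verify feasibility. The jobs of a modified bag $B_l$ in $\sigma'$ are the small jobs of $B_l$ together with exactly one filler per former large or medium job, and each of them lies on the machine that $\sigma$ used for the corresponding original job; since $\sigma$ respects the bag-constraint of $B_l$, these machines are pairwise distinct, so $\sigma'$ respects the bag-constraint of $B_l$. The same reasoning applied to the large jobs of $B_l$ shows the bag-constraint of $B'_l$ holds, and all remaining bag-constraints are inherited verbatim from $\sigma$ because those bags are untouched.

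Then I would bound the makespan machine by machine. Fix a machine $M$. Going from $\sigma$ to $\sigma'$, the load of $M$ loses every medium job of a non-priority bag that $\sigma$ placed on $M$ and gains, for every non-priority bag $B_l$ whose large job $\sigma$ placed on $M$, one filler of size $p_{max}(B_l)$ (the large job itself is retained inside $B'_l$). Replacing a medium job of $B_l$ by its filler only decreases the load, since $p_{max}(B_l)<\epsilon^{k+1}$ while every medium job has size at least $\epsilon^{k+1}$. Hence $\mathrm{load}_{\sigma'}(M)\le \mathrm{load}_{\sigma}(M)+\sum_{B_l} p_{max}(B_l)$, the sum ranging over the non-priority bags with a large job on $M$. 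For each such bag, that large job $j$ on $M$ has $p_j\ge\epsilon^k$, so $p_{max}(B_l)<\epsilon^{k+1}=\epsilon\cdot\epsilon^k\le\epsilon\,p_j$; and since the bag-constraint of $\sigma$ forces these large jobs (one per bag) to be pairwise distinct jobs all lying on $M$, we obtain $\sum_{B_l} p_{max}(B_l)<\epsilon\sum_j p_j\le\epsilon\cdot\mathrm{load}_{\sigma}(M)\le\epsilon C$. Thus every machine has load at most $(1+\epsilon)C$ in $\sigma'$, as claimed.

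I expect the only delicate point to be this last charging step: one has to notice that every filler job dropped on a machine can be paid for by a \emph{distinct} large job already sitting on that machine, and that a small job is by definition a factor $\epsilon$ lighter than any large job, so the extra weight accumulated on each machine is at most an $\epsilon$ fraction of its load. Feasibility and the observation that the medium-to-filler replacement never raises the load are immediate from the construction, so there is no further obstacle.
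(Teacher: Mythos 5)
Your construction and feasibility argument match the paper's exactly, and your makespan bound is the same calculation in a slightly different guise: you charge each filler directly to the large job it descends from (filler $<\epsilon^{k+1}=\epsilon\cdot\epsilon^k\le\epsilon\,p_j$), whereas the paper multiplies the worst-case filler size $\epsilon^{k+1}$ by the worst-case count $C/\epsilon^k$ of large jobs per machine; both give the additive $\epsilon C$. The proposal is correct and essentially identical to the paper's proof.
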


The advantage of this modification is that we only have non-priority bags that contain only large jobs or only small jobs. Thanks to this it is sufficient to schedule large and small jobs independently for these bags as they cannot conflict with each other. We can also see that we can apply this transformation in polynomial time and since we increase our number of jobs by at most factor $2$, this will not affect the overall running time of our algorithm except for this constant factor.

Now we will show how we can construct a solution with bounded height for the original instance given a solution for the modified one. First we will deal with the task to find a valid placement of medium jobs from non-priority bags as these were completely discarded in our modified instance. We will add these medium jobs to the extra added bags with large jobs and add them to the schedule such that there is no conflict among any of these jobs. To do so we will use a result of \cite{bagptas}. As it might not be obvious that we can apply the lemma of \cite{bagptas} we will give a rephrase of this lemma and also a proof. 

\begin{lem}\label{lem:mod2}
Let $I$ be an instance of the machine scheduling with bag-constraint problem and let $I'$ be the modified instance. Let $S'$ be a solution of the instance $I'$. We can expand $S'$ by adding and scheduling all medium jobs from non-priority bags of instance $I$ such that no medium and large job from the same bag $B_l$ in $I$ are on the same machine in $S'$. Additionally this schedule will increase the makespan of the solution $S'$ by at most $2*\epsilon \in O(\epsilon)$.
\end{lem}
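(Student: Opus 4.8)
The plan is to invoke the result of Das and Wiese \cite{bagptas} on scheduling medium jobs, but applied only to the sub-instance consisting of the medium jobs from non-priority bags together with the machine-load profile left over from $S'$. First I would record the key structural fact that makes this possible: by the instance transformation, every non-priority bag was split so that its large jobs live in a fresh bag $B'_l$ and its former medium (and large) jobs were replaced in $B_l$ by filler-jobs of size $p_{max}$; hence in $I'$ the medium jobs from non-priority bags do not appear at all, and the bag $B'_l$ currently contains only large jobs. I would add each medium job of the original bag $B_l$ into the corresponding $B'_l$. Since every machine in $S'$ holds at most one large job of $B'_l$ (because $S'$ is feasible for $I'$ and $B'_l$ is an ordinary bag of $I'$), the bag-constraint for $B'_l$ after this augmentation only forbids a medium job of $B_l$ from landing on the (at most one) machine already carrying the large $B'_l$-job; equivalently, the medium jobs of $B_l$ must avoid conflicting with each other and with that one large job.

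Next I would set up the residual instance: keep the machines with their current loads from $S'$, treat those loads as fixed "ground", and ask to place the medium jobs from non-priority bags on top, respecting (i) mutual non-conflict within each (original) bag and (ii) non-conflict with the at most one already-placed large job of the same bag. This is exactly the setting handled by the medium-job scheduling argument of \cite{bagptas}: by Lemma \ref{lem:medk} the total processing time of jobs in the size window $[\epsilon^{k+1},\epsilon^{k})$ is at most $\epsilon^2 m$, so the medium jobs from non-priority bags have total volume at most $\epsilon^2 m$; since each such job has size $<\epsilon^k\le\epsilon$, a greedy/list-scheduling placement (largest-processing-time-first with a round-robin over machines to separate jobs of the same bag) distributes them so that each machine receives additional load at most $\epsilon^2 m \cdot \tfrac{1}{m} + \epsilon^k \le \epsilon^2 + \epsilon \le 2\epsilon$ above its previous load, and the round-robin over $m$ machines guarantees two medium jobs of the same bag never collide (a non-priority bag is small, so it holds fewer than $\epsilon m\le m$ medium-or-large jobs, hence strictly fewer medium jobs than machines). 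To also avoid the single large job of the bag, reserve that one machine when cycling through that bag's medium jobs; this removes at most one machine from the rotation and, since the bag still has fewer than $m-1$ remaining medium jobs, the assignment stays conflict-free.

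Finally I would combine the bounds: the new schedule is feasible for the augmented bags $B'_l$ (medium jobs avoid each other and the large job of their bag, and no other bag is touched), and its makespan exceeds that of $S'$ by at most the extra per-machine load $2\epsilon\in O(\epsilon)$, giving the claimed bound. I expect the main obstacle to be the bookkeeping in the round-robin that simultaneously (a) separates the medium jobs of a bag from one another and (b) separates them from the one pre-placed large job of that bag, while still keeping the added load per machine within $2\epsilon$; the slack comes from the fact that non-priority bags are small (fewer than $\epsilon m$ medium-or-large jobs), so there is always at least one free machine in the rotation even after reserving the large job's machine, and Lemma \ref{lem:medk} controls the aggregate volume. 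A secondary point to check carefully is that reverting the transformation afterwards (filler-jobs back to the original medium/large jobs) is consistent with this placement, but that is exactly why we put the medium jobs into $B'_l$, which holds the copies of the large jobs, rather than back into $B_l$.
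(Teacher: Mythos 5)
Your proposal differs from the paper's in the key technical step, and it has a genuine gap. The paper proves the lemma via a \emph{flow network}: it first defines a fractional assignment $x_{i,j}=\tfrac{1}{|M^{B'_l}|}$ of each medium job $j\in B_l^{med}$ to every machine $i$ not holding a job of $B'_l$, shows the resulting per-machine fractional job count is at most $\tfrac{1}{(1-\epsilon)\epsilon^{k-1}}$, builds a bipartite flow network with unit capacities on bag-to-machine edges and sink capacities $\lceil\sum_j x_{i,j}\rceil$, and then uses flow integrality to obtain an integral assignment with at most one medium job per bag per machine and at most $\tfrac{2}{\epsilon^{k-1}}$ medium jobs per machine total, giving the $2\epsilon$ bound. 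Your proposal replaces this with an informal LPT/round-robin argument and asserts the per-machine load bound directly, which is exactly the part the flow network is needed for.

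Two concrete problems. First, you write that feasibility of $S'$ implies ``the (at most one) machine already carrying the large $B'_l$-job'' must be avoided; this is a non sequitur. Feasibility says each machine holds \emph{at most one} job from $B'_l$, but $B'_l$ can contain up to $\epsilon m$ large jobs distributed over up to $\epsilon m$ distinct machines, all of which must be avoided. The paper accounts for this by defining $M^{B'_l}$ as the set of machines holding \emph{no} job of $B'_l$ and using $|B_l^{med}|\le|M^{B'_l}|$. Your ``reserve that one machine'' step would need to become ``reserve the up to $|B'_l|$ machines holding $B'_l$-jobs,'' and then the later counting argument must be redone.

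Second, the claim ``each machine receives additional load at most $\epsilon^2 m\cdot\tfrac{1}{m}+\epsilon^k$'' is not a consequence of LPT or list-scheduling. List-scheduling bounds the \emph{final makespan} by average load plus largest job, not the \emph{per-machine increment}; a machine that started low-loaded can receive much more than the average of the newly added volume. Round-robin within a bag controls the number of jobs of that bag per machine (at most one), but unless you carefully coordinate the rotations across all bags, different bags can deposit their largest medium jobs on the same machines, so summing over bags does not give the per-machine volume bound you assert. The paper's fractional-assignment-plus-flow argument is precisely what turns the $\epsilon^2 m$ aggregate volume bound into a per-machine job-count bound $\tfrac{2}{\epsilon^{k-1}}$, which is then converted to $2\epsilon$ via the medium-job size ceiling $\epsilon^k$. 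Without that (or an equally careful deterministic balancing argument), the $2\epsilon$ per-machine bound is unsubstantiated.
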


With this lemma we gain a solution $S'$ for Instance $I'$ and this solution also contains all jobs from instance $I$. The only difference is that $S'$ may have put a small job and either a medium or a large job of the same bag in $I$ on the same machine, as these jobs were separated in two bags in $I'$. Using filler-jobs from $I'$ we can fix these conflicts and therefore generate a solution for $I$.

\begin{lem}\label{lem:mod3}
Let $I$ be an instance of the machine scheduling with bag-constraint problem and let $I'$ be the modified instance. Let $S'$ be a solution of the instance $I'$ that additionally contains all medium jobs from non-priority bags of instance $I$ as per lemma \ref{lem:mod2}. Then we can find a valid solution $S$ for instance $I$ that has at most the same makespan as $I'$.
\end{lem}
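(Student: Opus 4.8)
I need to reconstruct the original instance $I$ from $I'$. Recall the transformation: each non-priority bag $B_l$ (that had at least one small job) was split into $B_l$ (small jobs, with each medium/large job $j$ replaced by a filler-job $\overline{j}$ of size $p_{\overline{j}} = p_{\max}$, the largest remaining small job) and a new bag $B_l'$ holding the large jobs of $B_l$; by Lemma \ref{lem:mod2}, the medium jobs have already been added back into $B_l'$. So in $S'$ the jobs of the original bag $B_l$ are distributed among two $I'$-bags, and the only possible violation of the $I$-constraints is a machine carrying a filler-job $\overline{j}$ together with some job $j'$ of $B_l'$ (i.e.\ an original medium/large job that was in $B_l$), or two such jobs $j', j''$ of $B_l'$ on one machine — but the latter cannot happen since $S'$ is feasible for $I'$ and $B_l'$ is one bag in $I'$. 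So the only conflicts to repair are: a machine holding a filler-job of $B_l$ and a large-or-medium job of $B_l'$.

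The plan is to repair these conflicts machine by machine via a swap argument. Fix a machine $M$ that holds a filler-job $\overline{j}$ of $B_l$ (of size $p_{\max}$) and also a job $j'$ from $B_l'$. I want to swap $\overline{j}$ with a small non-filler job $j''$ of $B_l$ residing on a machine that does not hold a $B_l'$-job. Since $B_l$ had at least one small job, the job $j''$ realizing $p_{\max}$ exists and has $p_{j''} = p_{\overline{j}}$, so the swap is load-neutral on both machines involved: the makespan does not increase. After the swap, $M$ holds $j''$ (a genuine small job of $B_l$) instead of the filler, and the partner machine holds $\overline{j}$. I must check that the swap does not create a new conflict: on $M$, the only $B_l$-typed item other than $\overline{j}$ was $j'$, and $j''$ is a different job, so replacing $\overline{j}$ by $j''$ leaves at most one job of $B_l$ on $M$; on the partner machine, it previously held $j''$ and no $B_l'$-job (by choice), and after the swap it holds $\overline{j}$, again at most one $B_l$-job. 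Iterating over all conflicting filler-jobs, and finally deleting \emph{all} remaining filler-jobs (they are not jobs of $I$), yields a schedule where every machine holds at most one job of each original bag, hence a feasible solution $S$ for $I$ with makespan no larger than that of $S'$.

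The one subtlety — and the main obstacle — is guaranteeing the existence of a suitable swap partner, i.e.\ a machine carrying a non-filler small job of $B_l$ and no job of $B_l'$. Since $B_l$ is a non-priority bag, it is small, so $|B_l'|$ is small relative to the total number of relevant jobs, and the large/medium jobs of $B_l$ (now in $B_l'$) were placed without conflicts by Lemmas \ref{lem:mod2} and the feasibility of $S'$, so they occupy $|B_l'|$ distinct machines; meanwhile the small jobs of $B_l$ (the filler-jobs together with the genuine small jobs) are numerous, and in particular there is at least one genuine small job of size $p_{\max}$ available for each filler-job we need to relocate. A clean way to organize this is: process the $B_l'$-jobs one machine at a time, and since each original large/medium job $j$ of $B_l$ contributed exactly one filler-job $\overline{j}$ to $B_l$, we can set up a matching between the $B_l'$-jobs and the filler-jobs and argue that filler-jobs not currently sharing a machine with a $B_l'$-job can be moved onto those machines in exchange for... — but more simply, since we are allowed to \emph{delete} fillers, it suffices to note that any machine with a conflict between $\overline{j}$ and $j'\in B_l'$ can just have $\overline{j}$ deleted, provided this does not later cause a different bag's small-job count to matter; since deletion only decreases load and the constraints are "at most one per bag", deletion never creates a conflict. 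This observation in fact collapses the whole argument: delete every filler-job, and no conflict among $I$-bags can remain, because the only conflicts were filler-vs-$B_l'$. The swap language above is only needed if one insists on a particular correspondence between $I'$-small-jobs and $I$-jobs; for the makespan bound, deletion alone suffices and the claim follows immediately.
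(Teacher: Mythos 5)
There is a genuine gap, and it appears right at the start: you misidentify which conflicts can survive the merge. You claim the only $I$-violations in $S'$ are (a) a filler-job $\overline{j}$ of $B_l$ sharing a machine with a job $j'\in B_l'$, or (b) two jobs of $B_l'$ together. But there is a third, and in fact the essential, type: a \emph{genuine} small job $j\in B_l$ (not a filler) sharing a machine with a medium/large job $j'\in B_l'$. In $I'$ the bags $B_l$ and $B_l'$ are distinct, so nothing in the feasibility of $S'$ prevents this; but after merging back to the original bag of $I$, it is a violation, and deleting fillers does nothing to fix it. Because you omit this case, your final ``collapse'' (just delete all fillers) is false: it leaves exactly these genuine-small-vs-$B_l'$ conflicts unresolved.

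Your earlier swap is also oriented the wrong way. You propose to move the \emph{filler} $\overline{j}$ off the conflicting machine and bring a genuine small job $j''\in B_l$ onto it — but that machine still carries $j'\in B_l'$, so after the swap $j''$ and $j'$ are two jobs of the same original $I$-bag on one machine, i.e.\ a fresh conflict that survives filler deletion. The paper's proof does the opposite: when the conflicting small job $j$ is \emph{not} a filler, find a filler $\overline{j}$ on some non-conflicting machine and swap $j$ there (load-safe since $p_j \le p_{\overline{j}} = p_{\max}$); this converts every conflict into a filler-vs-$B_l'$ conflict, after which deleting all fillers yields a feasible $S$. The existence of enough fillers to absorb all conflicts follows from $f = g \ge c$ (one filler per large/medium job, and each conflict uses a distinct large/medium job). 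You should reverse your swap and carry out the counting argument; the deletion-only shortcut does not work.
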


With the last two lemmas we conclude that our modification not only gives us an easier structure to work with, but also keeps the overall error small. For the following we therefore assume that we continue working with our modified instance $I'$ and will denote this with $I$ in the next sections. Due to the modification and lemma \ref{lem:mod1} we know that our estimated optimal value for this instance increases up to $T = 1+2\epsilon+\epsilon^2$ (remember that our previous optimum was $1+\epsilon$ due to rounding).

For the rest of this paper we will show how to compute a $1+O(\epsilon)$ for our modified instance. With the last two lemmas \ref{lem:mod2} and \ref{lem:mod3} we know we can generate a schedule for our original instance in polynomial by only increasing the makespan by at most $2\epsilon$. This will overall lead to a $1+O(\epsilon)$ schedule. In the next section we will introduce our MILP that will allow us to find this schedule.

%MILP

\section{MILP}\label{S:MILP}

The main idea for the MILP is to build a modified configuration-LP. The configurations or patterns will only contain large and medium jobs. We will further focus on placing large and medium jobs from priority bags without conflicts, while we only use place-holder slots for non-priority bags and allow multiple jobs of non-priority bags on a machine. We will also consider positions for small jobs to ensure that we can place small jobs and account for possible conflicts with large and medium jobs from priority bags. Therefore we will also consider fractional packing of smaller jobs on top of patterns, while we enforce some small jobs from priority bags to be scheduled with integral variables. First of all we give a formal definition for these patterns, which partially resembles the pattern definition of Das and Wiese \cite{bagptas}. In our version however we do not need entries for every bags, but only for priority bags. For non-priority bags we introduce $B_x$ and $B_x^s$ for all item sizes $s$ to hold all jobs from non-priority bags (with respective size $s$), giving us overall a smaller amount of possible patterns.

\begin{defi}\label{def:pat}
Let $p$ be a tuple with up to $q = \frac{1+2*\epsilon+\epsilon^2}{\epsilon^{k+1}} $ entries of the form $B_l^s$ with $s$ being any medium or large item size and $B_l$ either being a priority bag with $l\le b'$ or $B_l =B_x$ for indicating a slot for an arbitrary non-priority bag.

We call $p$ a valid pattern if and only if: $\sum\limits_{B_l^s \in p}{s} \le T = 1+2\epsilon+\epsilon^2$ and for every priority bag $B_l$ there is at most one entry of a size-restricted bag of $B_l$ in $p$. A valid pattern may hold an arbitrary amount of $B_x$ entries. Furthermore define $\mathscr{P}$ to be the set of all valid patterns.
\end{defi}

One can imagine one entry $B_l^s$ of a pattern $p$ to be a slot reserved to the bag $B_l$ with size $s$. In case of an entry $B_x^s$ this slot will be reserved for a job of size $s$ of any non-priority bag. We want to remark that the relevant information of these entries are the bags and the size of the slot, we merely use $B_l^s$ as a notation for these entries instead of the pair $(l,s)$.

In our MILP we use integer variables $x_p$ for all patterns $p\in \mathscr{P}$. $x_p$ indicates how many machines will hold patterns of the form $p$. To consider a placement of small jobs for later we introduce variables of the form $y_p^{B_l^s}$ for any bag $B_l$ and any small item size $s$. These variables indicate how many jobs of $B_l^s$ we are going to place on top of a pattern $p$. A majority of these variables are going to be fractional but a constant number of $y$ variables will be integral, to ensure we can pack these jobs later without increasing the makespan by too much. In the MILP we will also make sure that the number of jobs and the assigned area to a pattern may not grow too large.

To formally define the MILP we are going to introduce some additional notations. First of all let $A$ be set of all indices of priority bags such that $l\in A$ means that $B_l$ is a priority bag and set $S_{small}$ to be the set of all small item sizes and $S_{ml}$ to be the set of all medium and large item sizes.
With $B_x^s := \bigcup\limits_{l\le b:l\not\in A} B_l^s$ , we denote the set of all jobs of non-priority bags of a certain size $s$. With $height(p)$ we will denote the height of the pattern $p$ which will be the sum off all included jobs in the pattern. We also remember that we set $T=1+2\epsilon+\epsilon^2$ to be the optimal height of our given instance.
Furthermore we define a characteristic function $\chi_p$ for patterns $p\in \mathscr{P}$ to indicate how many jobs of a bag appear in a pattern as follows: Given a potential entry $B_l^s$ of $p$ with $s\in S_{ml}$ and $l \in A \cup \{x\}$ we define
\begin{equation*}
\chi_p(B_l^s) :=  
\begin{cases}
       z &\quad\text{if } B_l^s \in p \text{ and } B_l^s \text{ appears } z \text{ times in  } p\\
       0 &\quad\text{otherwise.} \\ 
\end{cases}
\end{equation*}
 
 We extend this definition for full bags, let $B_l$ be any bag and set 
\begin{equation*}
\chi_p(B_l) :=  
\begin{cases}
       1 &\quad\text{if } \exists s\in S_{ml}:\chi_p(B_l^s) \neq 0 \\
       0 &\quad\text{otherwise.} \\ 
\end{cases}
\end{equation*}
Note that in this definition  $\chi_p(B_l)=0$ in case of $B_l$ being a non-priority bag, as the $ B_x$ occurrences count for no original bag of the instance.
With these notations we can now construct our MILP.

\begin{alignat}{3}
 && \sum\limits_{p \in \mathscr{P}}{x_p} \le m \quad & \\
&&\sum\limits_{p \in \mathscr{P}}{(x_p *\chi_p(B_l^{s}))} \ge |B_l^{s}| \quad&  \forall s \in S_{ml}, l \in A \cup \{x\}\\
&&\sum\limits_{p \in \mathscr{P}}{y^{B_l^{s}}_p} \ge |B_l^{s}| \quad&  \forall s \in S_{small}, l \le b\\
&&\sum\limits_{s \in S_{small}, l \le b}{y^{B_l^{s}}_p} *s \le x_p * (T- height(p)) \quad&  \forall p \in \mathscr{P}\\
&&\sum\limits_{s_i \in S_{small}}{y^{B_l^{s_i}}_p}  \le x_p * (1 - \chi_p(B_l)) \quad&  \forall p \in \mathscr{P}, l \le b\\
&& x_p \in \mathbb{N}_{\ge 0} \quad&  \forall p \in \mathscr{P} \\
&& y_p^{B_l^s} \in \mathbb{N}_{\ge 0} \quad&  \forall p \in \mathscr{P}, s\in S_{small},  s> \epsilon^{2k+11}, l \in A \\
&& y_p^{B_l^s} \in \mathbb{R}_{\ge 0} \quad&  \forall p \in \mathscr{P}, s\in S_{small},  s \le  \epsilon^{2k+11}, l\in A \\
&& y_p^{B_l^s} \in \mathbb{R}_{\ge 0} \quad&  \forall p \in \mathscr{P}, s\in S_{small},  l\le b, l \not \in A
\end{alignat}
The first condition of our MILP will ensure that we consider at most $m$ patterns as we only have $m$ machines to fill. Conditions (2) and (3) take care that every job of our instance will be placed on some machine. Constraint (2) will specifically check for an appearance of all large and medium jobs in patterns while (3) will ensure every small job will be placed somewhere. The constraint (4) will ensure that the average area that is scheduled on top of a pattern does not exceed the optimal height $T$. To respect conflicts among priority bags we added constraint (5). Constraint (5) first will not allow to pack any small job of a bag $B_l$ on top of a pattern $p$ that already holds large or medium jobs of $B_l$. Additionally this constraint will also only allow to place at most as many jobs of one bag $B_l$ on top of a pattern $p$ as the amount of machines the pattern was assigned to. This will help us later on when placing small jobs and will prevent conflicts between small jobs itself.

The variable constraints ensure that all $x$ variables are integral and most of the $y$ variables are fractional. The only exception are $y_p^{B_l^s}$ variables that belong to a priority bag $B_l$ with $l\in A$ and have an item size $s> \epsilon^{2k+11}$. We make sure that items of this size are packed as full jobs, as we are later going to round the fractional small jobs of priority bags and we need to ensure that the height increase through this rounding is bound by $O(\epsilon)$. We will still see that the number of integral variables is bound by constants dependent on $\epsilon$.

Before we look at the running time necessary to solve this MILP and the number of variables we want to proof, that this MILP will yield a solution, when there exists a solution for our instance.

\begin{lem}\label{lem:milp}
Given a modified instance $I$ of machine scheduling with bag-constraints. If there is a solution of makespan $T$ for $I$ then the MILP will have a valid solution.
\end{lem}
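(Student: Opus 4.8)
The plan is to start from a hypothetical optimal schedule $S$ for the modified instance $I$ with makespan $T$ and to read off from it values for all the MILP variables, then verify that every constraint (1)--(9) is satisfied. The construction is the natural one: for each machine $M_i$ in $S$, look at the multiset of medium and large jobs assigned to $M_i$; this induces a pattern $p(M_i)$ where a large or medium job from a priority bag $B_l$ ($l\le b'$, $l\in A$) contributes an entry $B_l^s$ with $s$ its (rounded) size, and a large or medium job from a non-priority bag contributes an entry $B_x^s$. I first need to argue $p(M_i)$ is a \emph{valid} pattern in the sense of Definition~\ref{def:pat}: the height bound $\sum_{B_l^s\in p}s\le T$ is immediate since these are a subset of the jobs on $M_i$ and $S$ has makespan $T$; the ``at most one size-restricted bag of $B_l$ per pattern'' condition for priority bags holds because $S$ is feasible for $I$, so no two jobs of the same bag share a machine; and the bound of $q$ entries per pattern follows because each medium or large job has size $\ge\epsilon^{k+1}$, so at most $T/\epsilon^{k+1}=q$ of them fit under height $T$ (this is exactly why $q$ was defined that way). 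Then set $x_p:=|\{i : p(M_i)=p\}|$.

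Next I would set the $y$-variables. For a machine $M_i$ with pattern $p=p(M_i)$ and a small item size $s$ and bag $B_l$, let $y$ record the number of size-$s$ jobs of $B_l$ placed on $M_i$ in $S$, summed appropriately and then distributed among the $x_p$ machines realizing pattern $p$: concretely, for each pattern $p$, sum over all machines with that pattern the number of $B_l^s$-jobs and call this $N_{p,l,s}$, and set $y_p^{B_l^s}:=N_{p,l,s}$ if we are allowed to put all on the bucket, but to respect the variable-type constraints (7) for $s>\epsilon^{2k+11}$, $l\in A$, which demand integrality, we can just keep $y_p^{B_l^s}=N_{p,l,s}\in\mathbb N$ — it is automatically integral since it counts jobs. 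So in fact the entire assignment from $S$ is integral; the fractional relaxations (8),(9) are only a relaxation and are trivially satisfied by an integral point. Now verify the constraints: (1) holds since there are $m$ machines; (2) holds since every large/medium job of $B_l^s$ ($l\in A\cup\{x\}$) lies on some machine and $\chi_p(B_l^s)$ counts exactly its multiplicity in the pattern; (3) holds since every small job of $B_l^s$ lies on some machine hence is counted by some $y$; (4) holds because on each machine $M_i$ the total size of small jobs is at most $T-\mathrm{height}(p(M_i))$, and summing over the $x_p$ machines with pattern $p$ gives $x_p(T-\mathrm{height}(p))$ on the right; (5) holds because in the feasible schedule $S$, if a machine carries a medium/large job of a \emph{priority} bag $B_l$ then (by bag-feasibility) it carries no other job of $B_l$, in particular no small job of $B_l$, which forces $\chi_p(B_l)=1\Rightarrow$ no small $B_l$-jobs on those machines, and otherwise ($\chi_p(B_l)=0$) it carries at most one job of $B_l$ total, again by bag-feasibility — this gives the per-machine bound of $1$ small job of $B_l$, summing to $x_p$; (6)--(9) are the variable-type constraints, satisfied since all our values are non-negative integers.

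The main subtlety — and the step I expect to require the most care — is constraint (5) together with the role of $B_x$: for non-priority bags $B_l$ ($l\notin A$) we have $\chi_p(B_l)=0$ by definition (the $B_x$ entries are not attributed to any original bag), so constraint (5) demands $\sum_{s_i}y_p^{B_l^{s_i}}\le x_p$, i.e. across the machines realizing pattern $p$, on \emph{average} at most one small job of $B_l$ per machine. In the original instance this need not hold for $S$ — a machine could carry one medium/large and one small job of the same non-priority bag, or indeed the pattern abstraction collapses distinct non-priority bags. Here is where the instance transformation of Section 2 is essential: in the modified instance $I$, every non-priority bag contains \emph{only} large/medium jobs or \emph{only} small jobs (large/medium jobs were moved to the separate bag $B'_l$ and replaced by filler small jobs, which themselves form small-only bags). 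So a non-priority bag $B_l$ that contributes a small job to machine $M_i$ is small-only, hence contributes no medium/large job to the pattern, and by bag-feasibility of $S$ it contributes at most one small job to $M_i$; summing over the $x_p$ machines of pattern $p$ gives $\le x_p$, which is exactly (5). I would make this argument explicit, invoking the structural property guaranteed by the transformation. The only remaining wrinkle is that $\mathrm{height}(p)$ in (4) is defined as the sum of job sizes in the pattern including $B_x$-slots, which matches the total medium/large load on the machine, so the small-job budget bookkeeping is consistent; and I should note that after the rounding of Section 2.1 all sizes are powers of $1+\epsilon$, so the finitely many patterns are well-defined. Assembling these checks completes the proof.
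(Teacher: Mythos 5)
Your construction mirrors the paper's: read off $x_p$ and $y_p^{B_l^s}$ from the feasible schedule $S$, then verify each constraint. That is the same approach, and most of your checks are fine; you even add something the paper glosses over, namely verifying that each induced tuple $p(M_i)$ is a \emph{valid} pattern (height bound, the $\le q$ slot count, one size-restricted sub-bag per priority bag), which is worth making explicit.

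There is, however, one claim in your ``main subtlety'' paragraph that is wrong and should be removed: you assert that for non-priority bags, constraint (5) ``need not hold for $S$'' in an untransformed instance because ``a machine could carry one medium/large and one small job of the same non-priority bag,'' and conclude that the instance transformation is \emph{essential} for this lemma. Neither part is correct. If $S$ is a feasible schedule for the instance $I$, then by the bag-constraint \emph{no} machine ever carries two jobs of the same bag $B_l$ — mixed or otherwise — regardless of whether $I$ was transformed. Hence each of the $x_p$ machines realizing $p$ carries at most one job of $B_l$, in particular at most one small job, and $\sum_{s}y_p^{B_l^s}\le x_p = x_p(1-\chi_p(B_l))$ follows directly. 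The paper makes exactly this pigeonhole argument (contrapositive form: $x_p{+}1$ jobs of $B_l$ on $x_p$ machines would contradict feasibility of $S$), uniformly for all $l$, without touching the transformation. The transformation \emph{is} essential elsewhere in the algorithm — it is what lets the forward direction (MILP solution $\Rightarrow$ schedule) place small and large jobs of non-priority bags independently, since the MILP's $B_x$ abstraction cannot prevent such clashes — but for this lemma (schedule $\Rightarrow$ MILP solution) feasibility of $S$ is all you need. I would delete the detour and replace it with the one-line pigeonhole argument; everything else stands.
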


Now that we know that our MILP will have a valid solution the question remains how fast we can compute this solution. We want to apply the result of Kannan \cite{kannan}, which is an improvement on the result of Lenstra \cite{lenstra}. 
The time to solve the MILP with these approaches depends strongly on the number of integral variables. Therefore we will look at this number in the following lemma.

\begin{lem}\label{lem:runtime}
Given an instance $I$ for machine scheduling with bag-constraints. We can solve the respective MILP with a running time of $2^{O(z*log(z))}*poly(n)$ with \\$z = 2^{O(\frac{1}{\epsilon^{k+1}}*log(\frac{1}{\epsilon^{2k+11}}*log^3(\frac{1}{\epsilon})))}$.
\end{lem}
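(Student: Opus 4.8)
The plan is to bound the number of integral variables in the MILP and then invoke Kannan's algorithm, whose running time is $g(d)\cdot\mathrm{poly}(L)$ where $d$ is the number of integral variables, $g(d)=d^{O(d)}=2^{O(d\log d)}$, and $L$ is the bit-length of the instance; this matches the claimed shape once we set $z$ to be (an upper bound on) $d$. So the real work is a counting argument, carried out in three stages: count valid patterns $|\mathscr{P}|$, count the integral $y$-variables attached to each pattern, and combine.

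First I would bound $|\mathscr{P}|$. A pattern has at most $q=\frac{T}{\epsilon^{k+1}}=O(\frac{1}{\epsilon^{k+1}})$ slots; each slot is a pair $(\text{bag},\text{size})$ where the bag is one of the $b'+1=O(\frac{1}{\epsilon^{k+2}})\cdot\mathrm{poly}(\frac{1}{\epsilon})$ many priority-bag indices (recall $b'=(d_{\mathrm{large}}q+1)q$ with $d_{\mathrm{large}}=O(\log_{1+\epsilon}\frac{1}{\epsilon^k})=O(\frac{1}{\epsilon}\log\frac{1}{\epsilon})$, so $b'=\mathrm{poly}(\frac{1}{\epsilon})$ up to the $\epsilon^{-k}$-type factors) together with the special symbol $B_x$, and the size is one of the $O(\log_{1+\epsilon}\frac{1}{\epsilon^k})=O(\frac{1}{\epsilon}\log\frac{1}{\epsilon})$ medium/large sizes between $\epsilon^{k+1}$ and $T$. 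Hence the number of distinct slot types is some $M=2^{O(\log(\frac1\epsilon))\cdot \mathrm{poly}(\frac1\epsilon)}$-style quantity; being slightly loose, $\log M = O(\frac{1}{\epsilon^{k+1}}\log\frac{1}{\epsilon})$ roughly, and $|\mathscr{P}|\le (M+1)^{q}$, so $\log|\mathscr{P}| = O(q\log M)=2^{O(\frac{1}{\epsilon^{k+1}}\log(\cdots))}$ — i.e. $|\mathscr{P}|$ itself is already doubly exponential in $\frac1\epsilon$, which is where the outer $2^{(\cdot)}$ in $z$ comes from. The key identity to keep track of carefully is exactly how the nested logs in the exponent of $z$ arise; I expect $z$ is essentially $|\mathscr{P}|$ times the per-pattern integral-variable count.

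Second, for each fixed pattern $p$, the integral $y$-variables are $y_p^{B_l^s}$ with $l\in A$ (a priority bag) and $s\in S_{\mathrm{small}}$ with $s>\epsilon^{2k+11}$. Small sizes range over powers of $1+\epsilon$ in $(\epsilon^{2k+11},\epsilon^{k+1})$, so there are $O(\log_{1+\epsilon}\frac{1}{\epsilon^{k+10}})=O(\frac{1}{\epsilon}\log\frac{1}{\epsilon})$ such sizes; combined with $|A|=O(b')=\mathrm{poly}(\frac1\epsilon)$ priority bags this gives a count of integral $y$-variables per pattern that is $\mathrm{poly}(\frac1\epsilon)$, something like $O(\frac{1}{\epsilon^{k+11}}\cdot\log\frac1\epsilon)$ — and this is the $\log(\frac{1}{\epsilon^{2k+11}})$ factor in the statement of $z$. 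Multiplying the per-pattern count by $|\mathscr{P}|$ and adding the $|\mathscr{P}|$ integral $x$-variables gives the total number $d$ of integral variables, and one checks $d\le z$.

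Finally I would plug $d=z$ into Kannan's theorem to get running time $z^{O(z)}\cdot\mathrm{poly}(n) = 2^{O(z\log z)}\cdot\mathrm{poly}(n)$, which is the claim. The main obstacle is purely bookkeeping: getting the tower of exponentials and nested logarithms in $z$ to come out exactly as written, in particular justifying that the dominant contribution is $|\mathscr{P}|$ (doubly exponential in $\frac1\epsilon$) and that all the polynomial-in-$\frac1\epsilon$ factors from bag counts, size counts, and the $\epsilon^{2k+11}$ threshold land inside the stated $\log^3(\frac1\epsilon)$-and-$\log(\frac{1}{\epsilon^{2k+11}})$ expression rather than blowing it up; since $n\ge \frac1\epsilon$ may be assumed (standard in scheduling EPTAS arguments, or else absorb $\frac1\epsilon$ into $\mathrm{poly}(n)$ trivially), no extra care is needed on the polynomial factor, and the constraint-count (number of rows of the MILP) only affects the $\mathrm{poly}(n)$ term in Kannan's bound, so it can be handled with a one-line estimate.
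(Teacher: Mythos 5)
Your approach is essentially the same as the paper's: bound the number of slot types $d_m(|A|+1)$, raise to the $q$-th power to bound $|\mathscr{P}|$, multiply by the per-pattern count of integral $y$-variables, and invoke Kannan's $d^{O(d)}\cdot\mathrm{poly}$ bound. One bookkeeping slip worth flagging: you write $d_{\mathrm{large}}=O(\log_{1+\epsilon}\frac{1}{\epsilon^k})=O(\frac{1}{\epsilon}\log\frac{1}{\epsilon})$ and repeatedly call $b'$, $|A|$, and the per-pattern $y$-count ``$\mathrm{poly}(\frac1\epsilon)$'' --- but $\log_{1+\epsilon}\frac{1}{\epsilon^k}=O(\frac{k}{\epsilon}\log\frac1\epsilon)$ and $k$ can be as large as $\frac{1}{\epsilon^2}$, and more importantly $q=O(\frac{1}{\epsilon^{k+1}})$ enters $b'=(dq+1)q$, so $|A|$ is \emph{not} polynomial in $\frac1\epsilon$ (the paper's bound is $O(\frac{1}{\epsilon^{2k+8}}\log^2\frac1\epsilon)$). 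These corrections don't change the shape of $z$, because the $\frac{1}{\epsilon^{2k+11}}\log^3\frac1\epsilon$ factors land inside the logarithm in the exponent exactly as you anticipate, but the clean derivation of that expression does require carrying the $k$- and $q$-dependencies through rather than collapsing them to $\mathrm{poly}(\frac1\epsilon)$.
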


With this lemma we also see that we achieved our desired running time that we may write as $f(\frac{1}{\epsilon}) * poly(n)$ for the respective function $f$. For the rest of the paper will show how to generate a schedule based on a given MILP solution. In the following we may assume that instead of slots we have already distributed specific jobs to slots, when the bag and job-size is already specified. This assumption can be made as we assign large and medium jobs of priority bags to each pattern. For small jobs we will assume that we can do the same and that this packing will prioritize jobs that can be distributed fully on one machine. One can achieve this type of job to slot distribution by scheduling full jobs first, so if there is a $y_p^{B_l^s}$ variable with $y_p^{B_l^s} \ge i$ for some $i \in \mathbb{N}$ then distribute $i$ jobs from $B_l^s$ on this machine and do the same for all variables of the same bag. After doing so we can arbitrarily fill up the fractional parts with any combination of jobs that are left. We will sort out a feasible distribution of small jobs to machines later. In the next part we will find a distribution of all large and medium jobs.

\subsection{Large jobs and medium jobs}

We will start building a schedule by distributing large and medium jobs. We can note that due to our modified instance that all medium jobs are contained in priority bags and all large and medium jobs from these priority bags have a definitive slot in patterns assigned to them. For non-priority bags we only assigned slots of certain sizes but no specific bag. This may lead to potential conflicts among large jobs of non-priority bags, if we try to pack everything like in the MILP solution. In fact we can repair these conflicts by swapping conflicting jobs with already well placed jobs from priority bags. By doing so we gain a feasible schedule without increasing the makespan of the MILP solution. 

\begin{lem}\label{lem:large}
Given an MILP solution for a modified instance of the machine scheduling with bag-constraints, we can find a placement of all large and medium jobs in polynomial time, such that no two jobs of the same bag are placed on the same machine and the load of medium and large jobs on each machine is the same load as assigned in the MILP.
\end{lem}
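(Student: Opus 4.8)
The plan is to start from the ``naive'' placement induced by the MILP solution and then repair the only kind of bag-constraint violation that can occur, by a sequence of size-preserving swaps with priority-bag jobs. For the starting placement I would, for every pattern $p$, take $x_p$ machines and for each entry $B_l^s$ with $l\in A$ place one still-unplaced job of $B_l$ of size $s$ into that slot; constraint $(2)$ with $l\in A$ guarantees enough such jobs exist. For the entries $B_x^s$ I would distribute the large jobs of the non-priority bags arbitrarily among the slots of size $s$; constraint $(2)$ with $l=x$ guarantees the slots suffice. In the modified instance every medium job lies in a priority bag, and by the validity condition of Definition~\ref{def:pat} each priority bag occurs at most once per pattern; hence the only way the bag-constraints can be violated is that two large jobs of one non-priority bag $B$ landed in $B_x$-slots on the same machine. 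Observe also that each machine now carries at most $q$ medium-or-large jobs, hence at most $q$ distinct priority bags, and that each priority bag $B_\ell$ of a fixed large size $s$ occupies exactly $|B_\ell^s|$ distinct machines (again by validity).

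For the repair, suppose a job $j$ of a non-priority bag $B$, of size $s$, causes a conflict on a machine $M$. I would replace $j$ on $M$ by a size-$s$ job $j'$ of some priority bag $B_\ell$ currently sitting on another machine $M'$, chosen so that $B_\ell$ does not already occur on $M$ and $B$ does not already occur on $M'$; then $j'$ moves to $M$ and $j$ moves into the slot freed on $M'$. Such a swap leaves every machine load unchanged (the two jobs have the same size $s$), removes the conflict on $M$, creates no new conflict — none on $M$ since $B_\ell\notin M$ beforehand, none on $M'$ since $B\notin M'$ — and keeps every priority bag at most once per machine. It remains to argue that such a pair $(B_\ell,M')$ always exists and that all conflicts can be resolved simultaneously; for the latter I would cast the entire repair as one bipartite $b$-matching (or integral flow) problem, matching each conflicting non-priority job, by size, to a legal $(B_\ell,M')$ slot, so that once feasibility is known the repair is carried out in polynomial time.

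Feasibility is where the constant $b' = (d\cdot q+1)\cdot q$ of Definition~\ref{def:prio} is used. Fix a large size $s$: there are $b'$ priority bags of size $s$, and at most $q$ of them sit on any single conflict machine $M$, so at least $b'-q = d q^2$ of them remain available to supply a replacement for $M$. Each such priority bag $B_\ell$ carries at least $|B^s|$ size-$s$ jobs on distinct machines, because the non-priority bag $B$ is by definition not among the first $b'$ bags in the order $o_s$, so $|B_\ell^s|\ge|B^s|$; and since every machine holds at most $q$ jobs, the ``blocked'' machines (those that carry $B$, and those that have become carriers of $B$ through an earlier swap) can absorb only a bounded number of these jobs per size. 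Comparing this supply against the demand — at most $|B^s|$ replacements needed at size $s$, summed over the at most $d$ large sizes occurring in $B$ — and invoking Hall's condition on the bipartite graph between conflicting non-priority jobs and legal $(B_\ell,M')$ slots produces a complete system of swaps.

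I expect the main obstacle to be exactly this accounting: pinning down, for each large size, how many priority-bag jobs lie simultaneously off the conflict machines and off the offending non-priority bags, so that the chosen value of $b'$ genuinely suffices, and organizing the repair (via the matching/flow formulation rather than an unstructured greedy pass) so that resolving one conflict can never reintroduce another. The remaining claims — that loads are preserved, that medium jobs and priority large jobs never conflict with themselves, and that everything runs in polynomial time — should then follow directly from the construction.
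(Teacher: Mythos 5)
Your starting point and the core repair idea---place non-priority large jobs arbitrarily in the $B_x$-slots, observe that the only possible violation is two large jobs of a single non-priority bag landing on one machine, and fix this by a size-preserving swap with a priority-bag job of the same size---is exactly the idea the paper uses, and your feasibility intuition (the supply guaranteed by $b'$ against a demand bounded per size and per bag) is pointed in the right direction. But the step you yourself flag as ``the main obstacle'' is precisely where the paper's argument does real work, and your global Hall's-condition / $b$-matching formulation does not, as sketched, close that gap: you only verify a local singleton condition (one swap partner exists for one conflict), and the supply is not static. After a swap, the machine that receives the non-priority job $p$ becomes a new carrier of its bag $B_r$, so the set of blocked machines \emph{grows} as the repair proceeds; moreover sizes are not independent, since jobs of $B_r$ at other large sizes also block machines. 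Your bound ``at most $|B^s|$ replacements, summed over $d$ sizes'' against a supply of $b'-q$ per conflict machine never confronts this coupling, and verifying Hall for all subsets would force you to redo the paper's counting.

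The paper avoids the global formulation entirely by a sequential greedy and one-swap-at-a-time repair, and the ordering of sizes is the crucial trick you do not use: sort the large sizes $s_1,\dots,s_d$ so that $|B^{s_i}_{o_{s_i}(b')}|$ is nondecreasing, then fill all size-$s_1$ slots before any size-$s_2$ slot, and so on. When a conflict for non-priority bag $B_r$ arises at size $s_i$, only jobs of $B_r$ of sizes $s_1,\dots,s_i$ have been placed, so $B_r$ occupies at most $\sum_{l\le i}|B^{s_l}_r|$ machines at that moment, and the sorting gives $\sum_{l\le i}|B^{s_l}_r|\le \sum_{l\le i}|B^{s_l}_{o_{s_l}(b')}|\le i\cdot|B^{s_i}_{o_{s_i}(b')}|\le d\cdot|B^{s_i}_{o_{s_i}(b')}|$. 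On the supply side, any block of $z:=dq+1$ consecutive top-$b'$ priority size-restricted bags at $s_i$ holds at least $z\cdot|B^{s_i}_{o_{s_i}(b')}|$ jobs, which (since a machine holds at most $q$ large/medium jobs) must occupy strictly more than $d\cdot|B^{s_i}_{o_{s_i}(b')}|$ machines, so one of them is free of $B_r$ and offers a candidate $p'$. If $p'$'s bag already sits on the conflict machine $c$, repeat with another of the $q=b'/z$ disjoint blocks; at most $q-2$ blocks can be ruled out by the other bags on $c$, so an unblocked one remains. This sequential processing is exactly what keeps the ``blocked-machine'' count under control and is the missing ingredient in your accounting; to make your matching route rigorous you would need either to import this size ordering and process size-by-size, or to carry out the full Hall verification, which would essentially reproduce the same estimate.
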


With this we conclude that, given an MILP solution, we are able to schedule all large and medium jobs. As we can also see, this swapping strategy used in this proof, runs in polynomial time as the swapping just needs to check each machine for a fitting replacement of each group and for potentially all large jobs from non-priority bags. Note that due to the swapping argument we may have changed up some patterns. When placing small jobs from priority bags, we will work with the patterns initially used by the MILP solution, which may lead to conflicts later on. We will show how to solve these conflicts with a similar swapping argument, when we have distributed small jobs in the next section.

%Small jobs

\section{Small jobs}\label{S:Small}

The final thing to do is to distribute small jobs to machines. So far we are given a fractional distribution of theses jobs via our MILP solution and the respective $y$ variables. We schedule small jobs in two steps. In the first step we will assign jobs from bags to groups of machines that have similar height such that no job on any machine of the group will cause a conflict. Knowing these groups we then use a greedy LPT-based approach to schedule jobs to specific machines in that group. More precisely we distribute jobs to these groups using a generalization of LPT that respects bag-constraints. We call the following algorithm bag-LPT: 

Assume we have $m'$ machines and bags $B_1,..,B_r$ and for simplicity each bag holds exactly $m'$ jobs, that can go on any machine without conflict. We may fill up bags with dummy-jobs of height $0$ if necessary. For all $i\le r$ sort all jobs of $B_i$ decreasing by height and all machines increasing by load. Now we schedule the $j$-th job on the $j$-th machine given by the respective sorted lists.

Note that this algorithm differs slightly from the greedy algorithm introduced in \cite{bagptas}, since we require that all machines are free for all bags. As we cannot guarantee this for all $m$ machines, we will use this algorithm on different groups of machines and we will ensure that all assigned jobs can run on any machine in the designated group. With that assignment we will use bag-LPT restricted to this group of machines and the set of jobs assigned to that group. With this prerequisite and some properties of LPT, we can conclude with the properties of LPT that this algorithm is good at distributing the overall area of assigned jobs, when the respective machines of a group have similar load.

\begin{lem}\label{lem:baglpt}
Given bags $B_1,..,B_r$ with at most $m'$ jobs each and $m'$ machines that all have the same height $h$. Let $p_{max}$ be the maximum job height of any given bag. Bag-LPT will schedule all jobs such that in the resulting schedule any two machines differ in height by at most $p_{max}$.
Further let $A$ be the total summed up area of all jobs in $B_1,..,B_r$ and write $A=m'*x$ for some $x\in \mathbb{R}$. Then the highest machine of the resulting schedule has a load of at most $h+x+p_{max}$.
\end{lem}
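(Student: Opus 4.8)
The plan is to analyze bag-LPT by tracking, for each machine, how its load evolves as the $r$ bags are processed one after another. First I would establish the "balance" claim (any two machines differ by at most $p_{max}$). The key observation is an invariant: after processing bag $B_i$, the spread (max load minus min load) over the $m'$ machines is at most $p_{max}$. This follows by induction on $i$. Initially all machines have load $h$, so the spread is $0$. Now suppose before processing $B_i$ the spread is at most $p_{max}$, and let the machines be indexed $1,\dots,m'$ in increasing order of current load $\ell_1 \le \ell_2 \le \dots \le \ell_{m'}$, and the jobs of $B_i$ be $a_1 \ge a_2 \ge \dots \ge a_{m'} \ge 0$ (padded with zeros). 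Bag-LPT assigns $a_j$ to machine $j$, giving new loads $\ell_j + a_j$. The standard LPT-exchange argument shows that after a single such "sorted-against-sorted" round the spread cannot exceed $\max(p_{max}, \text{old spread})$: for any two machines $j < j'$ we have $\ell_j \le \ell_{j'}$ and $a_j \ge a_{j'}$, so $(\ell_j + a_j) - (\ell_{j'} + a_{j'}) = (\ell_j - \ell_{j'}) + (a_j - a_{j'})$; the first term is $\le 0$ and bounded below by $-p_{max}$ (induction), the second is $\ge 0$ and at most $p_{max}$, hence the difference lies in $[-p_{max}, p_{max}]$. Thus the new spread is at most $p_{max}$, closing the induction. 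Since this holds after the last bag, the final spread is at most $p_{max}$, which is the first claim.

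For the second claim, I would combine the spread bound with an averaging argument. The total area placed on the machines is $h m' + A = h m' + m' x = m'(h+x)$, so the \emph{average} final load is exactly $h + x$. The minimum final load is therefore at most the average, i.e. $\le h + x$. By the spread bound just proved, the maximum final load exceeds the minimum by at most $p_{max}$, hence the highest machine has load at most $(h+x) + p_{max}$, as claimed.

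The main obstacle — really the only place that needs care — is making the single-round exchange argument fully rigorous, in particular handling the fact that after padding with dummy jobs we really do assign one job of $B_i$ to each of the $m'$ machines, and that "sort jobs decreasing, sort machines increasing, match in order" is exactly what produces the inequality $\ell_j \le \ell_{j'}$ together with $a_j \ge a_{j'}$ for $j < j'$. One should also note that the re-sorting of machines between bags is harmless: the spread is a symmetric function of the load multiset, so it does not matter that the machine indices get permuted from one bag to the next; the invariant is about the multiset of loads, not about individual machines. Everything else is elementary arithmetic, so I would keep the write-up short and emphasize the invariant and the averaging step.
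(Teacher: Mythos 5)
Your proof is correct and follows essentially the same approach as the paper's: induction over the bags to show the spread invariant (any two machine loads differ by at most $p_{max}$), followed by the same averaging argument (total area is $m'(h+x)$, so the minimum load is at most $h+x$, hence the maximum is at most $h+x+p_{max}$). Your inductive step collapses the paper's two cases (order preserved vs.\ order flipped) into a single bracketing inequality $(\ell_j - \ell_{j'}) + (a_j - a_{j'}) \in [-p_{max}, p_{max}]$, which is a cleaner write-up of the same argument rather than a different route.
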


We remark that in case that machines are not on the same height, the loads of machines grow closer to each other as smaller machines obtain larger jobs and vice versa. This goes on until one machine overtakes the other. With this observation our aim for placing small jobs is to build groups of machines of similar height and assign them a set of jobs from bags. By restricting the total assigned area we can use lemma \ref{lem:baglpt} to show that the height of machines does not increase by too much.
The process of grouping machines and assigning them jobs differs for priority and non-priority bags. We discuss this process in the following for both bag-types separately, starting with non-priority bags.

\subsection{Non-priority bags}

Before scheduling non-priority bags we first make an assumption on the required space for priority bags. For every pattern used in the MILP we will assume that the assigned area of small jobs from priority bags is evenly distributed on all machines that hold the respective pattern.
 So for a given pattern $p$ that was assigned to $x_p >0$ machines, we will assume that the load on every of these machines is $\frac{A_p}{x_p}$ with $A_p$ being the total area of fractional small priority-bag jobs assigned to $p$. With this assumption we show that our final schedule after placing non-priority bags will have a height of $1+O(\epsilon)$. We later show that we can pack priority bags not exactly as in this assumption, but the resulting error will be bound by $O(\epsilon)$ as well.

Now to schedule non-priority bags we first round up the heights of every machine to the next multiple of $\epsilon$ and consider all machines with the same height as a group. Denote these groups with $M_1,..,M_g$. We know that jobs from non-priority bags can go on any machine and we do not need to watch out for bag-constraints. Therefore we consider the following generalization of bag-LPT to distribute jobs to groups of machines.

For every non-priority bag with small jobs $B_l$ sort all jobs decreasing by their height and sort all groups of machine increasing by their average load. Assume without loss of generality  that $M_1,..,M_g$ is the resulting sorted list. Then for every $i\le g$ assign the first $|M_i|$ jobs of the sorted bag $B_l$ to $M_i$ and remove the jobs from the list.

We call this algorithm group-bag-LPT. We will now show that this algorithm will give us an assignment of jobs sufficient enough to construct a good schedule.

\begin{lem}\label{lem:grplpt}
Consider a given MILP solution and a schedule where large and medium jobs are already scheduled with respect to the MILP solution and small jobs of priority bags are scheduled evenly distributed in the reserved space on top of the patterns. Let $B_1,...,B_r$ be the non-priority bags with small jobs. Group-bag-LPT will schedule all small jobs from the non-priority bags such that the area assigned to any machine group $M_i$ will be bound by $|M_i|*(1+O(\epsilon)))$. Further applying bag-LPT to each machine group $M_i$ with the assigned jobs will yield a $1+O(\epsilon)$ schedule.
\end{lem}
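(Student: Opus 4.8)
\medskip

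The plan is to combine the area bound from the MILP constraints with the LPT-type guarantee of Lemma~\ref{lem:baglpt}, applied in two stages: first across machine groups (group-bag-LPT), then within each group (bag-LPT). First I would set up the bookkeeping. After rounding each machine's height up to the next multiple of $\epsilon$, the load on each machine increases by at most $\epsilon$, so the reserved free space on top of a pattern $p$ on each of its $x_p$ machines is at least $T - height(p) - \epsilon$, and by MILP-constraint~(4) the average area of small priority-bag jobs placed there is at most $T - height(p)$; hence after the even-distribution assumption each machine in the group carrying pattern $p$ has height at most $height(p) + (T-height(p)) + O(\epsilon) = T + O(\epsilon) = 1 + O(\epsilon)$. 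This gives a uniform upper bound on the ``base load'' of every machine before non-priority small jobs are added. The total area of small non-priority jobs is, by MILP-constraint~(3) summed over $l \notin A$ together with constraint~(4), no more than the total residual area $\sum_p x_p(T - height(p)) - (\text{priority small area})$, so the grand total of all job area (patterns + all small jobs) is at most $m \cdot T$.

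\medskip

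The second step is the group-level analysis. I would view group-bag-LPT exactly as bag-LPT with the $g$ machine groups playing the role of $g$ ``super-machines,'' where assigning the first $|M_i|$ jobs of a sorted bag to $M_i$ is the natural generalization of assigning one job per machine. I would then argue that the averaging version of Lemma~\ref{lem:baglpt} goes through: sorting groups by average load and bags by decreasing height, the total area handed to group $M_i$ is bounded by $|M_i|\cdot(\bar h_i + x + p_{max})$ where $\bar h_i$ is the (rounded) base load of machines in $M_i$, $x = A/m$ is the average residual area per machine, and $p_{max} < \epsilon^{k+1} \le \epsilon$ is the largest small-job height. Since $\bar h_i \le 1 + O(\epsilon)$ from the first step and $x \le O(\epsilon)$ (again by constraint~(4), the residual area is at most $\epsilon^2 m + \dots$ — more precisely it is bounded by $(T - \text{avg pattern height})$, which I need to be a bit careful with, but the key point is each group's average base load already dominates), the area assigned to $M_i$ is at most $|M_i|\cdot(1 + O(\epsilon))$, which is the first claim. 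Here I would need to restate and invoke the ``sorted-by-average-load'' refinement of Lemma~\ref{lem:baglpt}; the proof is the same exchange argument, noting that a group receiving more than its fair share of area must have had below-average load at every step it was chosen.

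\medskip

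The third step is the within-group analysis. Fix a group $M_i$ of $|M_i|$ machines. These machines do \emph{not} all literally have the same height, but they were grouped to have the \emph{same rounded} height, so they differ by less than $\epsilon$; I would absorb this $\epsilon$ discrepancy into the error term and apply Lemma~\ref{lem:baglpt} with $h = \bar h_i$ and $m' = |M_i|$. The crucial feasibility point is that every small non-priority job \emph{can} go on every machine of $M_i$ without a bag-conflict — this holds because, after the instance transformation, non-priority bags contain either only large jobs or only small jobs, so a small non-priority job never conflicts with a large/medium job already placed, and bag-LPT's own structure prevents two jobs of the same (transformed) non-priority small-bag from landing on the same machine (one job per machine per bag). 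Then Lemma~\ref{lem:baglpt} gives final machine height at most $\bar h_i + (\text{per-machine area in } M_i) + p_{max} \le (1+O(\epsilon)) + O(\epsilon) + \epsilon = 1 + O(\epsilon)$. Combining over all groups yields the $1+O(\epsilon)$ schedule.

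\medskip

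\textbf{Main obstacle.} I expect the delicate part to be the group-level bound: Lemma~\ref{lem:baglpt} as stated assumes all machines start at the \emph{same} height $h$, whereas across groups the average loads genuinely differ, so I must prove (or carefully cite) the ``machines sorted by increasing load'' strengthening and verify that the exchange/averaging argument still yields $|M_i|(1+O(\epsilon))$ rather than merely a bound in terms of the maximum base load. Getting the constants in the $O(\epsilon)$ to stay additive through two nested LPT applications — and checking that $p_{max} \le \epsilon^{k+1}$ together with the rounding-to-multiples-of-$\epsilon$ error do not compound into something worse than $O(\epsilon)$ — is where the real care is needed; the feasibility (no-conflict) claims are routine given the transformation of Section~2.
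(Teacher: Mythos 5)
Your high-level decomposition (base load via constraint (4), then a group-level area bound, then \ref{lem:baglpt} within each group) matches the paper, and your first and third steps are essentially what the paper does. The problem is your second step, and you correctly sense it yourself: the ``averaging version'' of Lemma~\ref{lem:baglpt} that you want to invoke is not a routine strengthening, and the bound you write down for it, namely that group $M_i$ ends up at height at most $\bar h_i + x + p_{max}$ with $x = A/m$ the \emph{global} fair share, is not correct. Group-bag-LPT is a load-balancing rule, not a fair-share rule: a group with small $\bar h_i$ deliberately receives \emph{more} than $|M_i|\cdot x$ of the non-priority area, so $\bar h_i + x$ underestimates its final load, while a group with $\bar h_i$ near $T$ should receive close to nothing, not $|M_i|\cdot x$. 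The quantity $x$ is also not $O(\epsilon)$ in general; the non-priority small jobs can account for a constant fraction of the total area, so the claim ``$x \le O(\epsilon)$'' that your conclusion relies on does not hold. Even a corrected fair-share bound would fail, because the spread of base loads across groups is $\Theta(1)$, so ``machines start at roughly the same height'' simply is not available across groups.

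The paper closes exactly this gap with a different argument that you would need to find. After running group-bag-LPT, re-sort the groups by their \emph{final} average heights $h_1 \le \dots \le h_g$ and let $i$ be the largest index with $h_i - h_{i-1} > p_{max}$ (set $i=1$ if none). Let $L$ be the union of $M_i,\dots,M_g$. An exchange argument in the spirit of Lemma~\ref{lem:baglpt} (a gap larger than $p_{max}$ between two machines forces that, at every step, the higher machine already had higher load and hence received the smaller job) shows that $L$ received, from every non-priority bag, exactly the $|L|$ smallest jobs, which is the minimum possible number and area of non-priority jobs that any feasible assignment can give to $L$. Since the MILP, via constraint~(4), puts total load at most $T\cdot|L|$ on those same $|L|$ machines, and the MILP itself assigns at least this minimum count to $L$ (because constraint~(5) caps non-priority jobs at one per machine elsewhere), the total load on $L$ after group-bag-LPT is at most $|L|\cdot T \le |L|(1+3\epsilon)$. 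In particular $h_i \le 1+3\epsilon$, and then $h_g \le h_i + (g-i)p_{max} \le 1+3\epsilon + (g-1)\epsilon^{k+1} = 1+O(\epsilon)$ because there are $g \le 1/\epsilon+O(1)$ groups and all remaining gaps are at most $p_{max} \le \epsilon^{k+1}$. The crucial idea you are missing is to bound the area on the top block $L$ \emph{directly against the MILP's per-machine budget $T$} rather than against a fair-share average, and then propagate heights downward across the small gaps; attempting to prove a uniform averaging lemma across groups with heterogeneous base loads is both harder than necessary and, as stated, false.
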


This concludes the placement of non-priority bags. We can also observe that bag-LPT and group-bag-LPT both run in polynomial time. Now the last thing that is left to do is fill up the reserved space with the respective priority bags.

\subsection{Priority bags}

For priority bags we want to follow the same approach: Identify fitting groups of machines and schedule jobs such that the overall area is limited. To also respect the bag-constraints we will consider the machines holding the same pattern as per MILP solution as groups. As we reserved space for these jobs, we may further assume that each of these groups have the same height given by the height of the pattern.
Remark that for now we will talk about patterns as used in the MILP and we ignore for now whether a pattern was changed while using the swapping argument of lemma \ref{lem:large}. This might cause conflicts with the next scheduling step but we will show later that these can be resolved. To avoid having an arbitrary amount of fractional jobs per group of machines we merge together fractional jobs.
 
Given pattern $p$ with $m_p> 0$ machines and a bag $B_l$ such that the MILP assigns jobs from $B_l$ to $p$. Assume without loss of generality that $j_1,...j_{n_p} \in B_l$ are the small jobs that are assigned to $p$ and let $\alpha_i \in (0,1]$ for $i\le n_p$ be the fractional amount of $j_i$ that was assigned to $p$. Further assume that $n_f\le n_p$ is the amount of jobs that was assigned fractionally to $p$ and without loss of generality assume also that $j_1,...j_{n_f}$ are the jobs that were assigned fractionally. We can see that $\alpha_i < 1$ for $i\le n_f$ and $\alpha_i = 1$ for $n_f < i \le n_p$. Set $m_f := m_p -(n_p-n_f)$ to be the number of machines that we need to distribute all fractional jobs on.  We modify $B_l$ by removing all jobs $j_i$ with $i \le n_f$ and replacing them with $m_f$ jobs that all have the same height given by $h_f := \sum\limits_{i \le n_f}{\frac{h_{j_i} *\alpha_i}{m_f}}$. Basically these new jobs are made out of equal sized pieces of each fractional job, so we gain $m_f$ new jobs, each consisting of a $\frac{1}{m_f}$ part of each fractional assigned job. All fully distributed jobs will stay the same. 

Now we have a situation where we have as many small jobs from the pattern as we have machines and therefore we can use bag-LPT to distribute them.

\begin{coro}\label{cor:prio}
Given an MILP solution and a pattern $p$. Let $A_p$ be the average area assigned to each machine of $p$ by small jobs in the MILP. By using bag-LPT on the $m_p$ assigned machines and the modified small jobs from priority bags as described above, we gain a distribution of modified small jobs that increases the height of each machine by at most $A_p + \epsilon^{k+1}$.
\end{coro}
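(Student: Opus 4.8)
The plan is to apply Lemma~\ref{lem:baglpt} directly to the group of $m_p$ machines holding pattern $p$, using the modified small jobs from priority bags as the job set. First I would recall the setup: after the merging step, the priority bag $B_l$ (and every other priority bag assigned to $p$) contributes exactly $m_p$ small jobs to $p$ — the $m_f$ equal-sized pieces of height $h_f$ plus the $n_p - n_f$ fully assigned jobs — so the hypothesis of Lemma~\ref{lem:baglpt} that each bag has at most $m'=m_p$ jobs is met (padding with height-$0$ dummy jobs where a bag has fewer than $m_p$ jobs assigned to $p$). By the reservation assumption discussed before the corollary, all $m_p$ machines of $p$ start at the common height $\mathrm{height}(p)$, so the "same height $h$" hypothesis of Lemma~\ref{lem:baglpt} is satisfied with $h = \mathrm{height}(p)$.

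Next I would identify the two quantities appearing in the conclusion of Lemma~\ref{lem:baglpt}. The total area $A$ of all modified small priority-bag jobs assigned to $p$ equals the total fractional area assigned by the MILP (the merging step preserves area, since each new job is built from $\frac{1}{m_f}$-pieces of the fractional jobs), which by definition is $m_p \cdot A_p$; hence in the notation of the lemma $x = A_p$. For $p_{\max}$, the crucial point is that every modified small job still has size at most $\epsilon^{k+1}$: the fully assigned jobs are genuine small jobs (size $< \epsilon^{k+1}$ by the classification), and each merged job has height $h_f = \sum_{i\le n_f} h_{j_i}\alpha_i / m_f \le \max_i h_{j_i} < \epsilon^{k+1}$ because $\sum_{i \le n_f}\alpha_i \le m_f$ (this is exactly what constraint~(5) of the MILP guarantees — the fractional mass of one bag on $p$ is at most $x_p = m_p$, and after removing the $n_p-n_f$ integral units it is at most $m_f$). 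So $p_{\max} \le \epsilon^{k+1}$. Plugging $h = \mathrm{height}(p)$, $x = A_p$, $p_{\max} \le \epsilon^{k+1}$ into the second part of Lemma~\ref{lem:baglpt} gives that the highest machine after bag-LPT has load at most $\mathrm{height}(p) + A_p + \epsilon^{k+1}$, i.e.\ the height of each machine of $p$ increases by at most $A_p + \epsilon^{k+1}$, which is the claim.

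The main obstacle I expect is making the bound $h_f \le \epsilon^{k+1}$ fully rigorous — specifically, justifying $\sum_{i\le n_f}\alpha_i \le m_f$ from the MILP constraints and the definition $m_f = m_p - (n_p - n_f)$, and confirming that the job-to-slot distribution convention adopted after Lemma~\ref{lem:runtime} (prioritizing jobs that fit fully on one machine) is what makes $\alpha_i = 1$ for $i > n_f$ and $\alpha_i < 1$ for $i \le n_f$ consistent with constraint~(5). A secondary technical point is checking that bag-LPT is actually applicable when different priority bags assigned to $p$ have different numbers of jobs; this is handled by the dummy-job padding already mentioned, and one should note that dummy jobs of height $0$ do not affect any of the quantities above. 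Everything else is a direct substitution into Lemma~\ref{lem:baglpt}.
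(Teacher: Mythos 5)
Your proof is correct and takes essentially the same route as the paper: invoke Lemma~\ref{lem:baglpt} with $h=\mathrm{height}(p)$ and $x=A_p$, and show $p_{\max}<\epsilon^{k+1}$ via the bound $\sum_{i\le n_f}\alpha_i\le m_f$ coming from constraint~(5). The only cosmetic difference is that you bound the merged-job height $h_f$ by the generic small-job size $\epsilon^{k+1}$, whereas the paper notes the tighter bound $h_f\le\epsilon^{2k+11}$ (because for priority bags only the $y$-variables with $s\le\epsilon^{2k+11}$ are allowed to be fractional), which it then reuses in Lemma~\ref{lem:assign}; for the statement of this corollary your weaker bound suffices.
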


Now given this schedule, that contains modified jobs, we can just keep the positions for fully distributed jobs. To insert the original fractionally distributed jobs, we use the new constructed jobs as slots to fill them in. By rounding up the height of these slots, we can ensure the height increase is bound.

\begin{lem}\label{lem:assign}
Given an MILP solution and a placement of large and medium jobs based on the MILP solution. Let there be an assignment of modified small jobs from priority bags based on corollary
\ref{cor:prio}. Then there is an assignment of all small jobs from priority bags to machines such that the load increases by at most $O(\epsilon)$.
\end{lem}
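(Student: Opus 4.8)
The plan is to reinterpret the merged jobs constructed for Corollary~\ref{cor:prio} as \emph{slots}, round their heights up to the next multiple of $\epsilon^{2k+11}$, and refill them with the original small jobs of the priority bags. The key structural fact is read off the integrality constraints of the MILP: for a priority bag $B_l$ (i.e.\ $l\in A$) and a small size $s>\epsilon^{2k+11}$ the variable $y_p^{B_l^s}$ is integral, so no such job is ever split between patterns. Consequently every small job of a priority bag that was dissolved into a merged job in Corollary~\ref{cor:prio} has size at most $\epsilon^{2k+11}$, while the fully assigned small jobs keep their place and contribute no further increase.

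First I would fix, for each priority bag $B_l$ and each small size $s\le\epsilon^{2k+11}$, an explicit realisation of the corresponding $y$-values with $\sum_p y_p^{B_l^s}=|B_l^s|$, so that the total height of the slots reserved for these jobs equals the total size of the jobs themselves. For a fixed priority bag I then collect all of its merged-job slots across all patterns; after rounding, a slot of nominal height $h_f$ has height at most $h_f+\epsilon^{2k+11}$. I distribute the jobs of $B_l$ of size at most $\epsilon^{2k+11}$ into these slots by a greedy next-fit rule: keep adding jobs to the current slot until the next job would exceed its rounded height, then move on. A volume argument shows that all jobs get placed (if every slot were filled above its nominal height $h_f$, the placed area would already exceed the total job area, which is impossible), and that each slot ends up carrying jobs of total height at most $h_f+2\epsilon^{2k+11}$.

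It remains to bound the load increase per machine. A machine running pattern $p$ carries, among the small jobs of priority bags, at most one object per priority bag that the MILP assigned to $p$; the fully assigned jobs are untouched and each merged slot of height $h_f$ is replaced by jobs of height at most $h_f+2\epsilon^{2k+11}$, so the machine's load grows by at most $2\epsilon^{2k+11}$ times the number of priority bags. By Definition~\ref{def:prio} this number is bounded by $b'\cdot d$ plus the number of large bags, and expanding $b'=(dq+1)q$ with $q=\Theta(\epsilon^{-(k+1)})$, $d=O(\log_{1+\epsilon}\frac{1}{\epsilon^{k}})$ and the bound $O(\epsilon^{-(k+2)})$ on the number of large bags, and using $k\le\epsilon^{-2}$, this count is $\epsilon^{-(2k+c)}$ up to polylogarithmic factors in $\frac1\epsilon$ for a constant $c<11$. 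Hence the total per-machine increase is at most $\epsilon^{2k+11}$ times that quantity, which is $O(\epsilon^{3}\log^{2}\frac1\epsilon)=O(\epsilon)$. Verifying this cancellation of the $2k$ in the exponent — which is exactly why the threshold $\epsilon^{2k+11}$ was built into the integrality constraints of the MILP — is the main point of the argument; the remaining ingredients (choosing the realisation of the $y$-values and running the greedy assignment) are clearly computable in time polynomial in $n$. Conflicts that arise from working with the original MILP patterns, either between two small jobs of one bag or between a small job and a large job moved by the swapping of Lemma~\ref{lem:large}, are not treated here and are repaired by the later swapping step.
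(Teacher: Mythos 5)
Your proposal shares the paper's opening moves (treat the merged fractional jobs from Corollary~\ref{cor:prio} as slots, round them up to scale $\epsilon^{2k+11}$, and pay for the rounding across all priority bags, which cancels the $\epsilon^{-(2k+\cdot)}$ in the bag count), and your accounting of the priority-bag count and the final $O(\epsilon)$ bound is correct. But the filling step diverges from the paper's argument and, as written, breaks.

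The central problem: you distribute the small jobs of a fixed priority bag $B_l$ into slots by next-fit, packing \emph{several} jobs of $B_l$ into one slot. Each slot sits on a single machine, so packing two jobs of $B_l$ into one slot places two jobs of the same bag on the same machine -- exactly the violation the bag-constraint forbids and that the rest of the algorithm (constraint~(5), bag-LPT, Corollary~\ref{cor:prio}) has been carefully protecting against. There is no later repair step for small-vs-small conflicts within one priority bag, so this is not saved by Lemma~\ref{lem:conf} either. Even ignoring the bag-constraint, the volume argument does not close: the rounded slot capacity is $\epsilon^{2k+11}$ and the jobs being inserted are only known to have size at most $\epsilon^{2k+11}$, so a slot closed by next-fit can be nearly empty; the ``every slot filled above $h_f$'' claim does not follow, and hence neither does the contradiction you want.

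The paper instead puts exactly \emph{one} remaining fractional job per slot. After rounding, every slot has height $\epsilon^{2k+11}$, which is enough for any single job that was left to distribute (those jobs have size at most $\epsilon^{2k+11}$ because the integral $y$-variables already placed the larger small jobs, as you correctly observed). The feasibility argument is a counting, not a volume, argument: if a bag $B_l$ has $n_l$ jobs left but only $s_l < n_l$ slots, the MILP assigned more jobs of $B_l$ to a pattern $p$ than the number of machines $x_p$ holding it, contradicting constraint~(5). This keeps at most one small job of $B_l$ per machine and avoids the issues above. Replacing your next-fit packing with this one-job-per-slot counting argument repairs the proof; the rest of your write-up (identifying which jobs can be fractional, rounding, and the $\epsilon^{2k+11}\cdot|A| = O(\epsilon)$ computation) stands.
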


Now we have constructed a complete schedule that contains all jobs and so far has a height of $1+O(\epsilon)$. We also know that non-priority bags are conflict free, as per our algorithm. For priority bags however we cannot guarantee feasibility in regards to the bag-constraints yet.
Since we moved large jobs during our proof of lemma \ref{lem:large} and we ignored this potential change of patterns while distributing small jobs, our schedule so far could have conflicts. To resolve these conflicts we do the same as for lemma \ref{lem:large} and move jobs around.

\begin{lem}\label{lem:conf}
Given a schedule $S$ constructed so far including all jobs. If $S$ after placing small jobs from priority bags contains conflicting jobs, then we can resolve these conflicts in polynomial time and will increase the height of the schedule by at most $O(\epsilon)$.
\end{lem}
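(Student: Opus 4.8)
The plan is to mimic the swapping argument from Lemma \ref{lem:large}, but now applied to the small jobs from priority bags rather than the large and medium jobs. First I would identify precisely where a conflict can occur. By construction, non-priority bags are conflict-free (they only ever contributed $B_x$-slots and were scheduled by group-bag-LPT/bag-LPT on machine groups with no bag-repetition), so every conflict in $S$ involves a priority bag $B_l$. Moreover, large and medium jobs were already placed conflict-free by Lemma \ref{lem:large}; constraint (5) of the MILP forbids placing a small $B_l$-job on a pattern that already holds a large or medium $B_l$-job, and it also caps the number of small $B_l$-jobs on a pattern $p$ by the number $x_p$ of machines carrying $p$. Hence, \emph{within the MILP's original patterns}, every small-job assignment we produced is conflict-free. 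The only source of conflict is that Lemma \ref{lem:large} may have \emph{swapped} a large or medium job between two machines that both carry pattern $p$ (or between a machine with pattern $p$ and one with a pattern $p'$ sharing the same size-slot), so a machine that received a large/medium $B_l$-job via the swap now also carries a small $B_l$-job that was placed assuming the original pattern. Thus the number of conflicting machines is bounded by the total number of swaps performed in Lemma \ref{lem:large}, which is polynomial, and on each such machine there is at most one conflicting small $B_l$-job (since we placed at most one small $B_l$-job per machine).

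Next I would repair each such conflict by the same type of exchange used in Lemma \ref{lem:large}. Fix a machine $M$ carrying a large/medium $B_l$-job and a conflicting small $B_l$-job $j$ of size $s_j \le T$, in fact $s_j < \epsilon^{k+1}$. I want to find another machine $M'$ that carries neither a $B_l$-job (of any size) nor would create a new conflict when it receives $j$, and that currently carries a small job $j'$ of size $s_{j'} \le s_j$ from a bag not present on $M$; then swap $j$ and $j'$. The existence of a suitable target follows from a counting argument identical in spirit to the one for large jobs: the bag $B_l$ occupies at most a constant fraction of the machines (if $B_l$ is a priority bag because it is among the first $b'$ size-restricted bags, only $b'$ machines per relevant size hold a $B_l$-job; if $B_l$ is large, the number of large bags is $O(1/\epsilon^{k+2})$), and since $m$ is large compared to these constants — or, in the degenerate small-$m$ case, one uses that the instance can be solved exactly — there are enough machines free of $B_l$ to absorb every conflicting small job without piling two of them onto one machine. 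Each swap moves only small jobs of size $< \epsilon^{k+1}$, and we may sort and process conflicts so that a job moved once is never moved again.

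Finally I would bound the height increase. Each repair swap replaces on the target machine a job of size $s_{j'}$ by a job of size $s_j \ge s_{j'}$ and on $M$ replaces $s_j$ by $s_{j'}$; the net increase on any machine in a single swap is at most $s_j < \epsilon^{k+1}$. Since we can arrange the swaps so that each machine is the recipient of at most one conflicting small job — using again the abundance of $B_l$-free machines, or by doing the swaps in rounds, one priority bag at a time — the total height increase per machine is $O(\epsilon^{k+1}) \le O(\epsilon)$. Combined with the fact that the number of priority bags is $O(b') = O(1)$ in terms of $\epsilon$ (and $O(1/\epsilon^{k+2})$ for large bags), the whole repair runs in polynomial time, and the makespan grows by at most $O(\epsilon)$, as claimed. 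The main obstacle, and the step that needs the most care, is the existence argument for a valid swap partner: one must check that after the large-job repair of Lemma \ref{lem:large} there remain enough machines that are simultaneously free of $B_l$, carry a small job of size at most $s_j$ from a bag not already on $M$, and whose own bag-set is not disturbed by receiving $j$ — this is where constraint (5), the bound $b'$ on priority size-restricted bags, and the bound $O(1/\epsilon^{k+2})$ on large bags all have to be combined, exactly paralleling the bookkeeping in the proof of Lemma \ref{lem:large}.
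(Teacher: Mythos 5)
Your diagnosis of where conflicts come from is right (they are created only by the large-job swaps of Lemma \ref{lem:large}, and hence involve a large priority-bag job that was moved onto a machine whose original pattern had small jobs of the same priority bag assigned to it). But the repair you propose is a swap with another small job, and the existence argument you give for the swap partner has a genuine gap. You claim that ``only $b'$ machines per relevant size hold a $B_l$-job'' when $B_l$ is a priority bag; that is not true --- a priority bag can contain up to $m$ jobs, so the set of $B_l$-carrying machines can be almost all of $[m]$, and your counting does not establish that a $B_l$-free machine holding a suitable smaller job from a bag disjoint from $M$'s bag-set exists. You flag this yourself as ``the step that needs the most care,'' but the plan as stated would not close it.

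The paper avoids a swap-partner search entirely. Instead of swapping, it defines, for each priority bag $B_l$ and each large $B_l$-job $j$, the machine $\mathrm{origin}_l(j)$ where the MILP originally scheduled $j$, and moves a conflicting small $B_l$-job to $\mathrm{origin}_l(j_{large})$. Constraint~(5) guarantees that machine received no small $B_l$-jobs and that it carries no medium $B_l$-job; if it now holds some other large $B_l$-job $j'$ (pushed there by a swap), one chases to $\mathrm{origin}_l(j')$, and injectivity of $\mathrm{origin}_l$ makes the chain terminate without revisiting a machine. This gives the uniqueness property for free (no two repaired small jobs collide) and directly yields the height bound: a machine gains at most one small $B_l$-job per large job moved off it, so the increase is at most $(\text{large jobs per machine}) \cdot \epsilon^{k+1} \le \frac{1+2\epsilon}{\epsilon^k}\cdot\epsilon^{k+1} = O(\epsilon)$. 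Your round-based argument that ``each machine is the recipient of at most one conflicting small job'' would need a separate justification you do not supply. So the proposal gets the right starting observation but takes a fundamentally different route whose key existence step is both incorrect as stated and unproven; the paper's origin-chasing argument is the idea you are missing.
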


With this last lemma we can resolve the last conflicts and gain a feasible $1+O(\epsilon)$ schedule for our modified problem instance. With the methods described in section 2 we can build therefore a respective solution for our original instance. The running time is dominated by finding the solution for the MILP. Since all other scheduling steps run in polynomial time we end up with an overall running time of the form $poly(|I|) * f(\frac{1}{\epsilon})$ for our algorithm. Finally this concludes the proof of our theorem and the main result of this paper.

%Conclusion
\section{Conclusion}

In this paper we have proven that there exists an EPTAS for machine scheduling with bag-constraints. With this we actually solved one of the open questions mentioned in \cite{bagptas}. We have done so by introducing a convenient instance transformation and a MILP that allows us to schedule all large jobs of the instance using a constant number of integral variables. Overall we showed that is sufficient to place a constant number of large jobs to find a schedule for all large jobs. Further we concluded with the given transformation that we can ignore some bag-constraints between large, medium and small jobs of some bags. 

Since the problem is strongly NP-hard it is unlikely that an FPTAS exists, unless $P=NP$. Nevertheless bag-constraints come with other open problems. For example one can consider another machine model. There has been some research on unrelated machines \cite{bagptas} but other machine models have not been looked at yet. Furthermore one can also consider other optimization functions for these types of constraints. It would also be interesting to see whether the techniques used in this paper still hold in any of these possible variations or whether they can be of use for other areas and problems with similar constraints.

%%
%% Bibliography
%%

\bibliography{References}

\begin{thebibliography}{10}

\bibitem{conflict}
Hans~L. Bodlaender, Klaus Jansen, and Gerhard~J. Woeginger.
\newblock Scheduling with incompatible jobs.
\newblock {\em Discrete Applied Mathematics}, 55(3):219--232, 1994.
\newblock \href {http://dx.doi.org/10.1016/0166-218X(94)90009-4}
  {\path{doi:10.1016/0166-218X(94)90009-4}}.

\bibitem{bagptas}
Syamantak Das and Andreas Wiese.
\newblock {On Minimizing the Makespan When Some Jobs Cannot Be Assigned on the
  Same Machine}.
\newblock In {\em 25th Annual European Symposium on Algorithms (ESA 2017)},
  volume~87 of {\em LIPIcs}, pages 31:1--31:14, 2017.
\newblock URL: \url{http://drops.dagstuhl.de/opus/volltexte/2017/7845}, \href
  {http://dx.doi.org/10.4230/LIPIcs.ESA.2017.31}
  {\path{doi:10.4230/LIPIcs.ESA.2017.31}}.

\bibitem{graham}
Ronald~L. Graham.
\newblock Bounds on multiprocessing timing anomalies.
\newblock {\em {SIAM} J. Discrete Math.}, 17(2):416--429, 1969.

\bibitem{schedptas}
Dorit~S. Hochbaum and David~B. Shmoys.
\newblock Using dual approximation algorithms for scheduling problems
  theoretical and practical results.
\newblock {\em J. {ACM}}, 34(1):144--162, 1987.
\newblock URL: \url{http://doi.acm.org/10.1145/7531.7535}, \href
  {http://dx.doi.org/10.1145/7531.7535} {\path{doi:10.1145/7531.7535}}.

\bibitem{jeptas1}
Klaus Jansen.
\newblock An {EPTAS} for scheduling jobs on uniform processors: Using an {MILP}
  relaxation with a constant number of integral variables.
\newblock {\em {SIAM} J. Discrete Math.}, 24(2):457--485, 2010.
\newblock URL: \url{https://doi.org/10.1137/090749451}, \href
  {http://dx.doi.org/10.1137/090749451} {\path{doi:10.1137/090749451}}.

\bibitem{jeptas2}
Klaus Jansen, Kim{-}Manuel Klein, and Jos{\'{e}} Verschae.
\newblock Closing the gap for makespan scheduling via sparsification
  techniques.
\newblock In {\em 43rd International Colloquium on Automata, Languages, and
  Programming, {ICALP} 2016, July 11-15, 2016, Rome, Italy}, pages 72:1--72:13,
  2016.
\newblock URL: \url{https://doi.org/10.4230/LIPIcs.ICALP.2016.72}, \href
  {http://dx.doi.org/10.4230/LIPIcs.ICALP.2016.72}
  {\path{doi:10.4230/LIPIcs.ICALP.2016.72}}.

\bibitem{lenstra}
Hendrik. W.~Lenstra Jr.
\newblock Integer programming with a fixed number of variables.
\newblock {\em Math. Oper. Res.}, 8(4):538--548, November 1983.
\newblock URL: \url{http://dx.doi.org/10.1287/moor.8.4.538}, \href
  {http://dx.doi.org/10.1287/moor.8.4.538} {\path{doi:10.1287/moor.8.4.538}}.

\bibitem{kannan}
Ravi Kannan.
\newblock Minkowski's convex body theorem and integer programming.
\newblock {\em Math. Oper. Res.}, 12(3):415--440, August 1987.
\newblock URL: \url{http://dx.doi.org/10.1287/moor.12.3.415}, \href
  {http://dx.doi.org/10.1287/moor.12.3.415} {\path{doi:10.1287/moor.12.3.415}}.

\bibitem{unrelated1}
Jan~K. Lenstra, David~B. Shmoys, and \'{E}va Tardos.
\newblock Approximation algorithms for scheduling unrelated parallel machines.
\newblock {\em Math. Program.}, 46(3):259--271, February 1990.
\newblock URL: \url{http://dx.doi.org/10.1007/BF01585745}, \href
  {http://dx.doi.org/10.1007/BF01585745} {\path{doi:10.1007/BF01585745}}.

\bibitem{schedptasimpro}
Joseph~Y.{-}T. Leung.
\newblock Bin packing with restricted piece sizes.
\newblock {\em Inf. Process. Lett.}, 31(3):145--149, 1989.
\newblock URL: \url{https://doi.org/10.1016/0020-0190(89)90223-8}, \href
  {http://dx.doi.org/10.1016/0020-0190(89)90223-8}
  {\path{doi:10.1016/0020-0190(89)90223-8}}.

\bibitem{unrelated2}
Evgeny~V. Shchepin and Nodari Vakhania.
\newblock An optimal rounding gives a better approximation for scheduling
  unrelated machines.
\newblock {\em Operations Research Letters}, 33(2):127 -- 133, 2005.
\newblock URL:
  \url{http://www.sciencedirect.com/science/article/pii/S0167637704000690},
  \href {http://dx.doi.org/https://doi.org/10.1016/j.orl.2004.05.004}
  {\path{doi:https://doi.org/10.1016/j.orl.2004.05.004}}.

\end{thebibliography}

%% Appendix
\newpage

\appendix

\section{Proofs of section \ref{S:Prel}}

\begin{proof}[Proof of lemma \ref{lem:mod1}]
Let $S$ be the solution for $I$ with makespan $C$. We will construct a solution $S'$ for $I'$ based on $S$ as follows: First of all schedule all jobs of priority bags the same way as in $S$. For non-priority bags we will do the same. More precise, schedule all original small jobs and large jobs from non-priority bags on the same machines as in the solution $S$. Every filler-job $\overline j$, that was added in $I'$, corresponds to a medium or large job $j$ in $S$, so place $\overline j$ on the same machine as $j$ was placed on $S$.

All jobs are now placed in the schedule and it is left to show that the schedule is feasible and bound in height. We will prove the first by indirect proof, so assume that $S'$ is not feasible and two jobs $j_1,j_2$ of the same bag are on the same machine $i$. Due to the way we placed $j_1,j_2$ on these machines we can follow that there must be $j'_1,j'_2$ in $S$ that were placed on the same machine while $j'_1,j'_2 \in B_l$ for one non-priority bag $B_l$. This means $S$ is an invalid solution, which is a contradiction. 

Now since $S'$ must be feasible we look at the makespan. Due to our construction, the solution resembles $S$ except for the filler-jobs of large jobs that we put on top of each machine. To be more precise, on each machine $i$ the load due to large jobs is the same. The load of medium jobs is either the same or even smaller when a medium job of non-priority bags got removed and replaced by a small filler-job. And finally the load due to small jobs is the same except that we added filler-jobs that correspond to large jobs in the original instance. So for all large jobs on a machine that machine gets an additional load of a small job.
Note that the number of large jobs any machine can hold is bound by $\frac{C}{\epsilon^k}$ since otherwise the makespan could not be $C$. Overall we get that the maximum height of any machine in $S'$ is bound by $C +\frac{C}{\epsilon^k} *\epsilon^{k+1} = C+\epsilon *C = (1+\epsilon)*C$.
\end{proof}

\begin{center}
\begin{figure}[bh]
  \centering
  
	\includegraphics[scale=0.4]{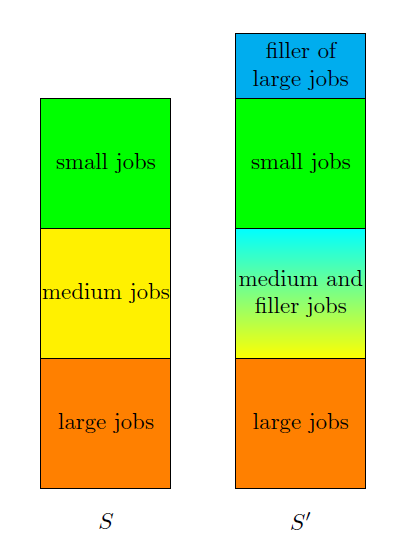}
	  \caption{}
		{\footnotesize Comparison of one machine in the solutions generated in the proof of lemma \ref{lem:mod1}. Left machine belongs to $S$ and right machine to modified solution $S'$. The machines hold the same large and small jobs (not accounting for filler-jobs). Medium jobs from non-priority bags are equal as well. Differences are that in $S'$ we have filler-jobs for medium jobs instead of the respective medium jobs in $S$ and additionally filler-jobs that correspond to large jobs.}
  \label{abb:modsol}
	
\end{figure}
\end{center}

\begin{proof}[Proof of lemma \ref{lem:mod2}]
Let $B_l$ be a non-priority bag of instance $I$ and denote with $B_l^{med}$ the set of medium jobs of $B_l$. Let $B'_l$ be the bag of exclusively $I'$ that contains all large jobs of $B_l$. Our goal is now to add jobs of $B_l^{med}$ to solution $S'$ such that no more than two jobs of $B_l^{med} \cup B'_l$ are on the same machine. Let $M^{B'_l}$ be the set of machines that does not hold any job of ${B'_l}$. We can follow that $|B_l^{med}| \le |M^{B'_l}|$ as $B_l^{med}$ and $B'_l$ made up one bag in our initial instance. We start our distribution of medium jobs with a fractional distribution that assigns jobs evenly on all free machines.  We do this by defining a vector $(x_{i,j})_{i\in M, j\in J_{non-prio}^{med}}$ with $J_{non-prio}^{med}$ being the set of all medium jobs of non-priority bags and 
$x_{i,j} := 
\begin{cases}
       \frac{1}{|M^{B'_l}|} &\quad\text{if } \exists l:  j\in B^{med}_l \text{ and } i\in  M^{B'_l}  \\
       0 &\quad\text{otherwise.} \\ 
\end{cases}$

Basically this construction will schedule a part of a medium job on all machines that are free for it, in the sense that it does not contain a job of $B'_l$. We get that for each machine $i$ and a bag $B_l$ it holds that $\sum\limits_{j\in B^{med}_l}{x_{i,j}} = \frac{|B^{med}_l|}{|M^{B'_l}|}$ if $i \in M^{B'_l}$, otherwise this sum will equal to $0$. Furthermore we know that $B_l$ in $I$ was a small bag (which implies that $B'_l$ is a small bag as well) and per definition we know that $|B'_l| \le \epsilon *m$, therefore 
$|M^{B'_l}| \ge (1-\epsilon)m$. Denote with $A$ the set of indices of all priority bags, so we have that $B_l$ with $l\le b$ and $l \not \in A$ is a non- priority bag. With this we can conclude that the number of jobs that any machine $i$ can receive is bound by:

\begin{align*}
\sum\limits_{l\le b: l\not \in A} \sum\limits_{j\in B_l^{med}} x_{i,j} &\le \sum\limits_{l\le b: l\not \in A} \frac{|B^{med}_l|}{|M^{B'_l}|} &\\
&\le \sum\limits_{l\le b: l\not \in A} \frac{1}{(1-\epsilon) m} * |B^{med}_l| &\\
&\le \frac{1}{(1-\epsilon) m} * \frac{\epsilon^2m}{\epsilon^{k+1}} & (*)\\
&\le \frac{1}{(1-\epsilon)*\epsilon^{k-1}}&
\end{align*}
The inequality for $(*)$ holds as the area of all medium jobs is bound by $\epsilon^2*m$ by our definition of $k$ via lemma \ref{lem:medk} and therefore we get that the number of medium jobs is also bound by $\frac{\epsilon^2m}{\epsilon^{k+1}}$.

We can now construct a directed flow network as follows: We have nodes $v_l$ and $w_i$ for each non-priority bag $B_l$ and each machine $i$ and a source nodes $s$ and sink $t$. Further our network is given by edges of the type $(s,v_l)$ with capacity $|B_l^{med}|$ for any $v_l$ and $(w_i,t)$ with a capacity of $\left\lceil \sum\limits_{j}x_{i,j}  \right\rceil$, which equals to the number of jobs that was assigned to $i$ in our constructed vector. Finally we have edges $(v_l,w_i)$ of capacity $1$ if and only if $i \in M^{B_l'}$. 

By intuition this network simulates an assignment of jobs to machines, as each path from $s$ to $t$ resembles an assignment of a job to a machine. Our constructed vector $x$ resembles a fractional assignment and we can follow there is a fractional flow for this network with value $\sum\limits_{l\le b: l\not \in A} |B_l^{med}|$. Flow theory implies that there exists an integral solution. As edges from bags to machines have capacity $1$ at most one medium job of any bag can be assigned to each machine and further this machine does not hold any large job of the respective bag with large jobs.
Finally we can conclude that, due to the capacity of the sink edges, we know this assignment will assign at most $\frac{1}{(1-\epsilon)*\epsilon^{k-1}}+1 \le \frac{2}{\epsilon^{k-1}}$ jobs to any machine. Overall we can compute this assignment in polynomial time and our solution increases in height by at most $\frac{2}{\epsilon^{k-1}} *\epsilon^{k} = 2\epsilon$ which concludes the proof.
\end{proof}

\begin{proof}[Proof of lemma \ref{lem:mod3}]
First of all we will take a loot at our current situation and compare $I$ and $I'$. Looking at the jobs, $I'$ holds the same jobs as $I$ and additional filler-jobs for each large and medium job of non-priority bags in $I$. For each non-priority bag that appears in $I$ we have two bags $B_l$ and $B'_l$ in $I'$ and the jobs of the original bag are separated among these two bags, with $B_l$ holding small jobs and the additional filler-jobs and $B'_l $ containing all large and medium jobs. Consider a solution $S'$ for $I'$. We will generate a solution $S$ by merging each bag pair together again, while using filler-jobs to remove conflicts that occur after merging these bags.

Consider therefore a third instance $I''$ that is given as a result of merging all bag pairs $B_l \cup B'_l$ of the instance $I'$ and keeping priority bags as they are. Basically $I''$ has the same set of jobs as $I'$ and the same bags as $I$ (just that non-priority bags were extended with filler-jobs). Consider the schedule $S'$ as a solution for $I''$. In case that $S'$ is valid for $I''$ regarding the bag-constraints we can remove all filler-jobs and we have a solution for $I$ that is feasible and has the same or lower makespan. In general this will not work and we will most likely end up with conflicting jobs on some machines. In fact given our definition it is not even guaranteed that a valid schedule for $I''$ exists as we might have that $|B_l \cup B'_l|>m$. We can neglect this problem though, as we only want to use $I''$ to show, that we can fix conflicts and remove filler-jobs to gain a solution for $I$. We will do both things in one step by swapping jobs around such that only filler-jobs cause conflicts.

For easier notation we will notate bags from $I''$ with $\overline{B_l}$, so we have that $\overline{B_l} = B_l \cup B'_l $ for the respective non-priority bags $B_l,B'_l$ of $I'$.
So consider a non-priority bag $\overline{B_l}$ that causes conflicts in $S'$ when considered as a solution for $I''$. Let $g$ be the number of medium and large jobs of $\overline{B_l}$, $f$ the number of filler-jobs in $\overline{B_l}$ and $c$ the number of conflicts. We know by definition of our modification that $f =g$. Since we know that $S'$ was a valid solution for $I'$ there cannot be conflicts between jobs of the same bag $B_l$ or $B'_l$. For that reason conflicts must occur between a small and a large/medium job of $\overline{B_l}$. Together we get that $f=g \ge c$ as we can have only as many conflicts as we have large and medium jobs. 

This means that for every conflict we have a filler-job somewhere in our schedule. The idea now is simple. If there is a conflict between two jobs and the small job is a filler-job we do nothing. If the conflicting small job, call it $j$, is not a filler-job we know there must be a filler-job $\overline j$ on a non-conflicting machine. We now can swap the position of these two jobs. By definition we know that the load of the non-conflicting machine does not increase as $p_j \le p_{\overline j}$. Note that the height of the other machine might increase, but as we are about to remove filler-jobs, so we can ignore this. By doing this kind of swap for every non-filler-job that causes a conflict we can construct a situation where our solution only has conflicts that are caused by filler-jobs. We can now remove all filler-jobs and end up with a solution $S$ that has no conflicts. Furthermore this solution is a feasible solution for the instance $I$ and has the same makespan as the solution $S'$ or even a lower makespan.
\end{proof}

\section{Proofs of section \ref{S:MILP}}

\begin{proof}[Proof of lemma \ref{lem:milp}]
Let $S$ be the solution with makespan $T$. Initiate all variables of the MILP solution with $0$. For each machine $i$ identify the pattern $p$ that was scheduled on $i$ in the solution $S$ and increment $x_p$. For all small jobs $j$ scheduled on $i$ let $B_l^s$ be the size-restricted bag that contains $i$ and increment $y_p^{B_l^s}$. We show that this generated solution must be valid. We can see that constraints (1), (2) and (3) must be satisfied. If this were not the case this would mean that there are either more than $m$ machines or not all jobs were scheduled, which contradicts $S$ being a valid solution. Also constraints (6), (7), (8) and (9) must be satisfied as we constructed an integer solution.

Assume that constraint (4) is not satisfied, let $p$ be the pattern of the failed constraint. We know that $x_p = 0$ implies that $y^{B_l^s}_p = 0$ for any $B_l^s$ based on how we set up our variables. So we get that constraint (4) must have failed, because the total load that was assigned to machines with pattern $p$ was too much and exceeded $T*x_p$. Since we got this assignment from our solution $S$ we can conclude that the total load on these $x_p$ machines exceeds $T*x_p$ making it impossible for $S$ to have a makespan of $T$.

Finally assume now that constraint (5) failed and let $p$ be the pattern and $B_l$ be the bag of the failed constraint. As in the previous case we can follow that $x_p \neq 0$. At least $x_p +1$ jobs of bag $B_l$ were assigned to $x_p$ machines by considering two cases dependant on $\chi_p(B_l)$. If $\chi_p(B_l) = 0$, then the constraint must have failed due to more than $x_p$ small jobs being assigned to $p$ and if $\chi_p(B_l) =1$ the constraint failed as at least one job of $B_l$
was assigned to $p$ and $p$ also  contains large jobs of $B_l$. This means that the solution $S$ must have placed $x_p+1$ jobs of the same bag on $x_p$ machines, making solution $S$ infeasible.

Overall we can conclude that all constraints must be satisfied by our solution.
\end{proof}

\begin{proof}[Proof of lemma \ref{lem:runtime}]
We will start with the number of priority bags to calculate the number of patterns.
By definition \ref{def:prio} we know that for each large item size we have at most $b'$ priority bags. We can bound the number of priority bags, given by $|A|$, with $d*b' = d*q*(q*d+1) \in O(d^2*q^2)$. We know that $d\in O(log_{1+\epsilon}(\frac{1}{\epsilon^k}))$ and we can conclude with
\begin{align*}
log_{1+\epsilon}(\frac{1}{\epsilon^k}) &= \frac{log(\frac{1}{\epsilon^k})}{log(\epsilon+1)} \\
&\le  \frac{log(\frac{1}{\epsilon^k})}{\frac{\epsilon}{1+\epsilon}} \\
&=   k*log(\frac{1}{\epsilon})*\frac{1+\epsilon}{\epsilon} \\
&\le   \frac{1}{\epsilon^2} *log(\frac{1}{\epsilon})*(\frac{1}{\epsilon} +1)
\end{align*}  
that $d\in O(\frac{1}{\epsilon^3} *log(\frac{1}{\epsilon}))$. By definition we also know that $q\in O(\frac{1}{\epsilon^{k+1}})$. Together we get that the number of priority bags $|A| \in O(\frac{1}{\epsilon^{2k+2+6}} *log^2(\frac{1}{\epsilon}))$, this even holds when adding the number of large bags as this number is bound by $O(\frac{1}{\epsilon^{k+2}})$. 

We know by definition \ref{def:pat} that the number of possible entries in a pattern is bound by $d_m*(|A|+1) \in  O(\frac{1}{\epsilon^{2k+8+3}} *log^3(\frac{1}{\epsilon}))$, with $d_m$ being the number of medium item sizes. With the same argumentation as for $d$ we know that $d_m\in O(\frac{1}{\epsilon^3} *log(\frac{1}{\epsilon}))$.
We can further conclude that the number of all patterns can be bound by $(d_m* (|A|+1))^q$. Now we include the variables of constraint (7). With the same argument again we can see that the number of small item sizes in (7) is bound by $O(\frac{1}{\epsilon^3} *log(\frac{1}{\epsilon}))$ and together we can finally bound the number of used integer variables by  
$O((\frac{1}{\epsilon^{2k+11}}*log^3(\frac{1}{\epsilon}))^{q+1}) = 2^{O(\frac{1}{\epsilon^{k+1}}*log(\frac{1}{\epsilon^{2k+11}}*log^3(\frac{1}{\epsilon})))  }$.

With this we can give the total running time with the result of Kannan, which depends on the number of integral variables $z = 2^{O(\frac{1}{\epsilon^{k+1}}*log(\frac{1}{\epsilon^{2k+11}}*log^3(\frac{1}{\epsilon})))}$. We achieve a running time of $z^{O(z)}*poly(g) = 2^{O(z*log(z))}*poly(g)$ with $g$ being the length of the input. As the number of constraints (excluding variables constraints) is bound by $O(n^2)$ we get the desired running time.
 
\end{proof}

\begin{proof}[Proof of lemma \ref{lem:large}]

For starters we place all large and medium jobs of priority bags as given in the MILP solution. This placement will not only be feasible so far but also contain every medium job of our instance. Now for the non-priority bags we first of all will assume that all large item sizes $s_1 ,..,s_d$ are sorted such that $|B_{o_{s_i}(b')}^{s_i}|\le |B_{o_{s_{i+1}}(b')}^{s_{i+1}}| $ for all $i<d$ (remember that $o_s$ is the function giving the ordered indices based on the priority bag definition \ref{def:prio}). We start by placing all items of size $s_1$, then $s_2$ and so on, so we can assume for item size $s_i$ with $i\le d$ that all items of sizes $s_{i'}$ with $i'<i$ are already placed with respect to the bag-constraint. We remember further the definition of $b' = q*(dq+1)$ and set $z := (dq+1)$ for the following. 

Consider a slot of size $s_i$ reserved for a job of bag $B_x$  by the MILP. Recall that $B_x^{s_i} \subseteq \bigcup\limits_{b'< l \le b}{B_{o_{s_i}(l)}^{s_i}}$ and choose a non-priority bag $B^{s_i}_{o_{s_i}(l)}$ with $b'<l\le b$ and $l\not \in A$ such that $B^{s_i}_{o_{s_i}(l)}$ has the maximum number of jobs and does not violate against any bag constraint on the machine the slot is on. Place any job of $B^{s_i}_{o_{s_i}(l)}$ on the respective machine, remove the assigned job from the bag and continue with the next slot. In fact any greedy based algorithm to distribute jobs to slots will suffice for this, as we cannot guarantee to not run into a conflict.

Such a conflict may arise when we are forced to place a job on a machine, that already holds another job of the same bag. This scenario can be unavoidable depending on the MILP solution. Assume without loss of generality that we want to place a job $p \in B^{s_i}_{r}$, with $B_{r}$ being a non-priority bag, on a machine $c$ that already has a job of $B_{r}$ assigned. We want to solve this conflict by finding another job $p'$ of size $s_i$ that was placed on another machine $d$. We then can swap the slots of $p$ and $p'$ and schedule $p$ on the machine $d$, while putting $p'$ on $c$. We will see that we can find $p'$ such that there will be no conflicts on neither $c$ nor $d$.

Therefore consider the bags $B^{s_i}_{o_{s_i}(l)}$ for $l\le z$. Due to the ordering, given by the index permutation $o_{s_i}$, we can conclude that these bags all together hold at least 
\begin{equation}
\sum\limits_{l\le z}{|B^{s_i}_{o_{s_i}(l)}|} \ge z* |B^{s_i}_{o_{s_i}(z)}| \ge z*  |B^{s_i}_{o_{s_i}(b')}|  \tag{*}
\end{equation} jobs. 
Since each machine can have at most $q$ medium or large items and since
\begin{align*}
\frac{z}{q} *  |B^{s_i}_{o_{s_i}(b')}|&=  \frac{(dq+1)}{q} *  |B^{s_i}_{o_{s_i}(b')}|  \\
&= (d+\frac{1}{q}) |B^{s_i}_{o_{s_i}(b')}| \\
&> d *|B^{s_i}_{o_{s_i}(b')}| 
\end{align*}
we can conclude that the items from the above fixated bags must be distributed among at least $d*|B^{s_i}_{o_{s_i}(b')}| +1$ different machines.
Furthermore we get that:
\begin{align*}
d*|B^{s_i}_{o_{s_i}(b')}| +1 &> d*|B^{s_i}_{o_{s_i}(b')}| & \\
&\ge i*|B^{s_i}_{o_{s_i}(b')}|  & i \le d\\
&\ge \sum\limits_{l \le i}|B^{s_i}_{o_{s_i}(b')}| &  \\
&\ge \sum\limits_{l \le i}|B^{s_l}_{o_{s_l}(b')}| & \text{order of item sizes} \\
&\ge \sum\limits_{l \le i}|B^{s_l}_{r}| & \text{order of bags}
\end{align*}

The last two inequalities hold due to the respective ordering for size-restricted bags and item sizes. The last inequality also uses the fact that $B_r$ is a non-priority bag and as such has a higher position in the sorted list $o_s$ than $b'$ for any item size. If this were not the case, then $B_r$  would be a priority bag to begin with. Therefore with the whole inequality we see there is strictly more machines that hold jobs from $B^{s_i}_{o_{s_i}(l)}$ for $l\le z$ than machines with jobs from $B_r$ so far. So choose $p'$ to be an arbitrary job from $B^{s_i}_{o_{s_i}(l)}$ for some $l\le z$ such that $p'$ currently is assigned to a machine $d$ not holding a job from $B_r$. If $p'$ causes no conflict on $c$ we can assign $p'$ to $c$ and $p$ on $d$, thus the conflict is solved. In case $p'$ causes a conflict, we then can consider another group of bags to find an alternative job to swap with $p$. Consider another group of bags $B^{s_i}_{o_{s_i}(l)}$ with  $(g)z< l\le (g+1)z$ for some $g< q$. The argumentation from above  still holds as the inequality $(*)$ is still true when considering these other groups of bags. Since there is at most $q -2$ other jobs on $c$ that may conflict with the replacement job $p'$ and we have $q$ groups to consider, we are sure to find one job to fix the conflict.

So with this approach we are able to schedule all large jobs without altering the objective of the MILP solution as we only swap around jobs of the same size.
\end{proof}

\section{Proofs of section \ref{S:Small}}

\begin{proof}[Proof of lemma \ref{lem:baglpt}]
The second part follows immediately from the first. In the final schedule the smallest machine can have a load of at most $h+x$ because otherwise the distributed area would exceed the total area that is given by $h*m + A = (h+x)*m$. Assuming the first part holds we get that the highest machine cannot be higher than $h+x+p_{max}$. We now proof the first part by induction over the number of bags $b$. 

The induction base for $b=1$ follows immediately by the fact that all machines have the same height $h$. As each machine receives one job, possibly of height 0 for dummy jobs, we get that the height difference of two machines is bound by $p_{max}$. As induction hypothesis assume that $b\ge 1$ bags are placed so far such that the height difference of any two bags is bound by $p_{max}$. For the induction step consider now the bag $B_{b+1}$. Consider further two machines $m_1,m_2$ with heights $h_1,h_2$ prior to placing $B_{b+1}$ and let $j_1,j_2 \in B_{b+1} $ be the jobs assigned to $m_1,m_2$ respectively as per bag-LPT and denote their heights with $p_1,p_2$.

Without loss of generality assume that $h_1 \le h_2$, or swap the machines if it does not hold. If $h_1 = h_2$ we have the same case as in the induction base and the claim follows immediately, since their different in height after placing $B_{b+1}$ can be at most $|p_1 - p_2| \le p_{max}$. So assume that $h_1 < h_2$ and by the property of bag-LPT we can follow that $p_2 \le p_1$. Let $d := h_2 - h_1$ be the difference of height before placing the bag $B_{b+1}$. Consider two cases given by the relationship of the new heights of machines.

Case 1: $h_1 + p_1 \le h_2 + p_2$, so $m_1$ still has the lower load of both machines. We get now for the new height difference $d_2$ that:
$d_2 = h_2 + p_2 - (h_1 + p_1)= h_2 - h_1 + p_2 - p_1 \le h_2 -h_1 = d$. So overall the difference of load is lower than before.

Case 2: $h_1 + p_1 > h_2 + p_2$. For this case as the order of machines switched, we get that the new load difference is bound by the difference of the height of jobs $ p_1 - p_2 \le p_{max}$ as $j_1$ has to cover the previous height difference and the height of $j_2$ to change the order of machines. 

Finally by our induction we can conclude that the statement holds.

\end{proof}

\begin{proof}[Proof of lemma \ref{lem:grplpt}]
Consider the groups of machines $M_1,..,M_g$ and note that $g \le \frac{1}{\epsilon} + O(1)$ since the height of our MILP solution is bound by $1+2*\epsilon + \epsilon^2 \le 1+3*\epsilon$. Without loss of generality let $M_1,..,M_g$ be sorted non-decreasing by their average heights after running the group-bag-LPT and denote with $h_1,..,h_g$ these heights. Let $1 < i\le g$ be the largest index such that $h_{i} - h_{i-1} > p_{max} $ with $p_{max}\le \epsilon^{k+1} $ being the height of the biggest small job in a non-priority bag. If there is no such index set $i:= 1$. Note that now for $i< i' \le g$ we have that $h_{i'} - h_{i'-1} \le p_{max}$.

Let $L:= \bigcup\limits_{i\le i' \le g}{|M_{i'}|}$ be the set machines in $M_i, .., M_g$. Since the difference in height between any machine $L$ and any machine not in $L$ is larger than $p_{max}$ and therefore by definition of group-bag-LPT all machines in $L$ must have received the $|L|$ smallest jobs of each bag. If this were not the case then both machines would have a load difference $\le p_{max}$. Further we know the MILP distributed jobs from small bags, such that the total area assigned to machines in $L$ is bound by $L(1+3\epsilon)$. As machines in $L$ receive only the smallest jobs we can also conclude that the total area assigned to $L$ after group-bag-LPT is the same or even smaller than in the MILP and thus is also bound by $L(1+3\epsilon)$. With this we can also see that the average height of the machine group $M_i$ is bound with: $h_i \le 1 + 3 \epsilon$.
Overall we can now conclude the average height of the largest machine group. We have that:
$h_g \le (g-i)* p_{max} + h_i \le (g-1) * \epsilon^{k+1} + 1 +3\epsilon = 1 + O(\epsilon)$

Finally with lemma \ref{lem:baglpt} we can conclude that after applying bag-LPT to all groups of machines their respective height is bound by 
$1 + O(\epsilon)$, since we only have small jobs to distribute.
\end{proof}

\begin{proof}[Proof of corollary \ref{cor:prio}]
The proof follows from lemma \ref{lem:baglpt}. We remark that with  constraint (5) we ensured that at most $x_p$ jobs could be assigned to each pattern. Therefore we get that:

\begin{equation*}
\sum\limits_{i \le n_f}\alpha_i \le m_f.
\end{equation*}

Knowing this we can conclude that with $h_{max} := \max\limits_{i \le n_f}{h_{j_i}}$ that the height of any constructed job $h_f$ is also bound by:

\begin{equation*}
h_f = \sum\limits_{i \le n_f}{\frac{h_{j_i} *\alpha_i}{m_f}} \le \sum\limits_{i \le n_f}{\frac{h_{max} *\alpha_i}{m_f}} = 
\frac{h_{max}}{m_f} *\sum\limits_{i \le n_f}\alpha_i \le  h_{max} \le \epsilon^{2k+11}< \epsilon^{k+1}
\end{equation*}

\end{proof}

\begin{proof}[Proof of lemma \ref{lem:assign}]
We start by rounding the height of all fractional constructed jobs up to $\epsilon^{2k+11}$. With the same estimation of the proof of corollary
\ref{cor:prio} we get that the previous height of any these jobs was equal or smaller than $\epsilon^{2k+11}$. As we have $O(\frac{1}{\epsilon^{2k+10}})$ priority bags this rounding may increase the load on any machine by $O(\epsilon)$.

Now we see our rounded constructed jobs as slots for the actual small jobs that were fractionally distributed in the MILP. We know through our rounding that any small job that we need to distribute fits in any slot, so we show that we have enough slots to accommodate all jobs. Consider a bag $B_l$ with $n_l$ jobs left to distribute and let $s_l$ be the number of slots that consisted of fractional parts from jobs of $B_l$. Assume for an indirect proof that $s_l < n_l$. As we have only $s_l$ slots the MILP must have distributed all $n_l$ jobs among $s_l$ machines fractionally. This can only happen if too many jobs were assigned to one pattern, which is a contradiction to constraint $(5)$ of the MILP.

\end{proof}

\begin{proof}[Proof of lemma \ref{lem:conf}]
As by our algorithm we can conclude that any conflict may only arise through applying the techniques in lemma \ref{lem:large}, that is swapping a large job of a priority bag with a large job of a non-priority bag to resolve conflicts. Therefore we can conclude the only type of conflict arises between a large and a small job from priority bags. Note that a conflict between small and medium jobs is impossible as we never move medium jobs around (remember that by modification non-priority bags have no medium jobs) and a conflict between these kinds of jobs would contradict a feasible solution of the MILP.

When a small and a large job are conflicting, we want to use the machine the large job was initially placed on as the new machine for the small one. Therefore define for every priority bag $B_l$ and for every large job $j\in B_l $ $origin_l(j)$ to be the machine that $j$ was assigned to in the MILP solution. We can observe that this $origin_l$ function for every priority bag $B_l$ is injective, as every large job was assigned to exactly one unique machine. Further we can conclude that for a large job $j\in B_l$ and machine $i = origin_l(j)$ that in our current solution $i$ cannot hold a small or a medium job from $B_l$, as this would contradict either constraint $(5)$ of the MILP (for a small job) or the definition of patterns (for a medium job). Machine $i$ may however still hold either $j$ or another large job, that was moved there after $j$ was moved away. With these observations we consider the following strategy of removing conflicts:

Let $j_{small}, j_{large}$  be a pair of conflicting jobs on a machine $i_c$ from a bag $B_l$. Let $i = origin_l(j_{large})$ and consider $i$ as a new machine for $j_{small}$. In case $i$ is free, as in it does not hold any job from $B_l$ we are done. In the other case we have a job $j$ on machine $i$ and we can conclude with our observations that $j$ must be large. So set $i = origin_l(j) $ and consider this machine now as a potential new machine for $j_{large}$. We continue this until we find a free machine.

We will prove the correctness of this procedure by showing two things: First we prove that this procedure terminates and eventually finds a free machine. Secondly we will show that this machine is unique and no two jobs from the same bag will end up on the same machine. Let therefore $j_{small}, j_{large}\in B_l$  be a pair of conflicting jobs on a machine $i_c$. First off note that the $origin_l$ function will never point to $i_c$, since $i_c$was assigned a small job by the MILP and a large job would imply a violation of constraint $(5)$. Further we have that $origin_l$ is injective, so every new large job from $B_l$ that potentially blocks a machine, will point to a new machine and eventually one has to be free.

For the same reason it is also impossible that two conflicting small jobs end up on the same machine. Consider additionally to the previous situation a job $j_{small}^2$ on a machine $i_c^2$ causing a conflict. First since we distribute all small jobs to different machines we get that $i_c^2 \neq i_c$.  We can further conclude that $i_c^2$ will never be visited while finding a spot for $j_{small}$, since this would violate the MILP again in constraint (5). Also any machine/large job seen while trying to find a spot for $j_{small}^2$ will never point to a machine visited by $j_{small}$, cause this would violate the injectivity of the $origin_l$ function. So we get that our repair strategy terminates and also finds a feasible schedule.

Applying one repair step potentially increases the height of our schedule. The height of a machine will potentially increase when the MILP assigned a large job to a machine $m$ and we move this job away. So for every large job we move away from the machine it was assigned to by the MILP the height of this machine might increase. As each machine holds at most $\frac{1+2\epsilon}{\epsilon^{k}}$ large jobs we can bound the height increase by $\frac{1+2\epsilon}{\epsilon^{k}} + \epsilon^{k+1} = \epsilon + 2\epsilon^2$.

\end{proof}

\end{document}